\documentclass{article}
\usepackage[utf8]{inputenc}
\usepackage{natbib}
\usepackage{amsthm, amsmath, amssymb}
\usepackage[hidelinks]{hyperref}
\usepackage{tikz}
\usepackage{comment}

\title{On Selecting and Conditioning in Multiple Testing and Selective Inference}
\author{Jelle Goeman\footnote{Leiden University Medical Center, Department of Biomedical Data Sciences.} \and Aldo Solari\footnote{University of Milano-Bicocca, Department of Economics, Management and Statistics.}\,\,\footnote{Ca' Foscari University of Venice, Department of Economics.}}
\date{July 2022}
\usepackage[margin=3cm]{geometry}
\linespread{1.1}
\sloppy
\usepackage{times}

\newtheorem{proposition}{Proposition}
\newtheorem{observation}{Observation}

\newtheorem{lemma}{Lemma}

\begin{document}

\maketitle

\begin{abstract}
    We investigate a class of methods for selective inference that condition on a selection event. Such methods follow a two-stage process. First, a data-driven (sub)collection of hypotheses is chosen from some large universe of hypotheses. Subsequently, inference takes place within this data-driven collection, conditioned on the information that was used for the selection. Examples of such methods include basic data splitting, as well as modern data carving methods and post-selection inference methods for lasso coefficients based on the polyhedral lemma. In this paper, we adopt a holistic view on such methods, considering the selection, conditioning, and final error control steps together as a single method. From this perspective, we demonstrate that multiple testing methods defined directly on the full universe of hypotheses are always at least as powerful as selective inference methods based on selection and conditioning. This result holds true even when the universe is potentially infinite and only implicitly defined, such as in the case of data splitting. We provide a comprehensive theoretical framework, along with insights, and delve into several case studies to illustrate instances where a shift to a non-selective or unconditional perspective can yield a power gain.
\end{abstract}

\section{Introduction}

When many potential research questions are considered simultaneously, researchers often only report a subset of the findings, typically the most striking, interesting, or surprising ones. When interpreting results  selected in this way, it is crucial to recognize that the evidence for the findings may be exaggerated due to the selection process. The field of selective inference, also known as multiple testing, strives to adjust inference for this data-driven selection of research questions. Selective inference methods ensure that the number or proportion of incorrect findings among the final reported findings remains small. The selective inference literature is large and well-established
\citep{benjamini2010, dickhaus2014, taylor2015, taylor2018survey,  benjamini2019, cui2021,
kuchibhotla2021,
zhang2022}. Classic approaches in the field either control of the familywise error rate or the false discovery rate.

Recently, a two-step approach to selective inference has gained popularity \citep{fithian2014,  lee2016, tibshirani2016, 
charkhi2018, 
bi2020}. In this conditional approach, the data are first used to select a small set of hypotheses of interest from a large universe of hypotheses. Next, inference is conducted on the selected hypotheses using the same data, but conditional on the information used for the selection. The conditional approach can be seen as a sophisticated generalization of data splitting. In data splitting, part of the subjects are used to select hypotheses, and the rest for inference on them. Conditional approaches similarly use part of the information in the data for selection, and the remainder for inference.
Proponentes of conditional selective inference often contrast their approach to classical methods, suggesting that the conditional way of thinking represents the most fitting philosophy for selective inference, addressing the problem of selection in the most effective way. E.g., \cite{kuffner2018} state ``The appropriate conceptual framework for valid inference is that discussed in the statistical literature as ‘post-selection inference’, which % is based on ensuring relevance of sampling
%distributions used for inference to the particular data sample [...] Specifically, inference after adaptive model determination (‘data snooping’) 
[...] requires conditioning on the selection event and control of the error rate of the inference given it was
actually performed.''.

%\say{QUOTE2 The potential loss of accuracy (power) through the necessary conditioning is undesirable, but may not be practically consequential: possible overconditioning is a worthwhile price to be paid for validity.}

Conditional selective inference methods return a selection-adjusted $p$-value for each of the selected hypotheses, or a selection-adjusted confidence interval for each of the selected parameters. The key property of these selection-adjusted measures, i.e., uniformity under the null for $p$-values, and coverage for confidence intervals, holds conditional on the selection event. In the situation that more than one such $p$-value or confidence interval is returned, some authors argue for a further round of adjustment for multiple testing \citep[e.g.][]{benjaminiy2019}, while others consider it as an option \citep[e.g.][]{hyun2021} or do not perform any further correction \citep[e.g.][]{lee2016}. Even when further multiple testing is done, however, this is generally not considered part of the conditional selective inference method itself, but simply a post-processing of the selection-adjusted $p$-values or confidence intervals returned by the method. This detachment of the selection and inference steps has been criticized as circular, because the interpretation of selected but not significant hypotheses is not always clear\citep[Section B.1, supplement]{weinstein2020}.

In this paper we adopt an alternative, holistic perspective on conditional selective inference. We argue that any follow-up, in terms of multiple testing or lack thereof, on the selection-adjusted $p$-values should be regarded as an integral component of the selective inference method. From this point of view, conceptual differences between conditional selective inference and classical methods largely vanish. 
We argue that for every conditional selective inference method, there exists a method that is not selective and not conditional that always rejects all the hypotheses the original method rejects, and possibly more. We give several general conditions under which unconditional and non-selective methods are truly superior to selective conditional methods, and present several worked-out examples. Our results hold for methods returning selection-adjusted $p$-values or selection-adjusted confidence intervals, and apply to a variety of error rates.

\section{Conditional selective inference: basics} \label{sec overview}

Let $\mathrm{P} \in M$ be a probability measure, where $M$, the model, is a collection of probability measures defined on a common outcome space $\Omega$. We will first focus on hypothesis testing, addressing confidence intervals in Section \ref{sec FCR}. A hypothesis is a subset $H \subseteq M$, and $H$ is true if $\mathrm{P} \in H$, and false otherwise. We have data $X$, distributed according to $\mathrm{P}$.  

Conditional selective inference procedures consider a random collection of hypotheses. Sometimes we assume that we know the distribution of $S$, such as, for example, when $S$ consists of the null hypotheses corresponding to the active set of a lasso regression. In other cases we may have only a realization of $S$ without knowledge of its distribution, such as when $S$ was chosen freely by a user on the basis of the first half of the data. In both cases, however, we will assume that we know what part of the information in $X$ was used to select $S$. In the lasso example we know this information because we know how $S$ was calculated. In the data-splitting example we know that the user only saw part of the data. %Formally, we know a sub-$\sigma$-algebra under which $S$ is measurable.

We will illustrate our general discussion with a recurring toy example. Assume that two $p$-values $P_1, P_2$ are independent, and that $P_1 \sim \mathcal{U}(0,1)$ under hypothesis $H_1$ and $P_2 \sim \mathcal{U}(0,1)$ under $H_2$. A simple selective inference procedure could discard hypotheses for which the $p$-values are larger than some fixed $\lambda$. In this case we have $S = \{i\colon P_i \leq \lambda\}$. This a situation considered by \cite{zhao2019} and \cite{ellis2020}. A similar selection set would arise when doing inference based on the polyhedral lemma if the design is orthogonal \citep{reid2017}. 

The collection $S$ is drawn from a larger universe of hypotheses, which often remains implicit in the selective inference literature. Let $\mathcal{S} = \{S(\omega) \colon \omega \in \Omega\}$ be the collection of all possible realizations of $S$. We define the universe $U$ as all hypotheses that could have been in $S$. Formally,
\[
U = \bigcup_{\omega \in \Omega} S(\omega). 
\]
Unlike $S$, the universe $U$ is fixed. It can be huge, or even infinite. For example, when $S$ are null hypotheses for the regression coefficients of the active set of a lasso regression, then $U$ contains all null hypotheses for all regression coefficients for all covariates adjusted for all possible sets of other covariates \citep[compare][]{berk2013,   bachoc2020}. In other cases $U$ is even unknown. For example, if $S$ was chosen freely by the user using half of the data, then $U$ contains all hypotheses the user would have chosen if the data would have been different. In this case we know nothing about $U$ except that it is a superset of $S$. To avoid trivial problems, we assume that $U \neq \emptyset$. In the toy example we have $U = \{1,2\}$.

Conditional selective inference methods define selection-adjusted  $p$-values $p_{H|S}$ for $H \in S$. These have the property that, for every $\alpha \in [0,1]$, 
\begin{equation} \label{eq adjusted p}
\sup_{\mathrm{P} \in H} \mathrm{P} (p_{H|S} \leq \alpha \mid S) \leq \alpha. 
\end{equation}
The selection-adjusted $p$-value differs the usual definition of the $p$-value $p_H$, i.e., for every $\alpha \in [0,1]$, $\sup_{\mathrm{P} \in H} \mathrm{P} (p_{H} \leq \alpha) \leq \alpha,$ because it conditions on $S$. 
By conditioning on the selection event $S$, the selection-adjusted $p$-value discards the information used for that selection. It uses as evidence against the selected hypothesis $H$ only the remainder of the information in the data. Conditioning thus provides a neat separation between the information used for selecting $S$ and for inferring on the hypotheses in $S$. Condition (\ref{eq adjusted p}) remains valid if we condition on more than just $S$, but \cite{fithian2014} argued that it is optimal to condition on the minimal amount of information under which $S$ is measurable.

There are many methods for calculating selection-adjusted $p$-values. The most straightforward way to achieve (\ref{eq adjusted p}) is to separate the data into two independent components, writing $X = (X', X'')$ and making sure that $S$ is a function of $X'$ only, while $p_{H|S}$, for every $H \in S$, involves $X''$ only. This is the basic idea of data splitting \citep{moran1973, cox1975, rubin2006, dahl2008,  wasserman2009, rinaldo2019}. 
%More sophisticated splits may use the data more efficiently, as proposed by 
More sophisticated methods may use the data more efficiently by employing external randomization \citep{tian2018r, rasines2021, leiner2021, panigrahi2022approximate, panigrahi2022exact, dharamshi2023generalized} or multiple data splits \citep{ meinshausen2009, diciccio2020, schultheiss2021}. Some methods split the data adaptively, unmasking the data bit by bit until the user is ready to select the final set $S$ and calculate the $p$-values conditional on that final $S$ \citep{lei2018, duan2020}. If an obvious split of the data is not available, the mathematics of the conditioning can become quite complex. The polyhedral lemma \citep{lee2016, tibshirani2016}, an important breakthrough, provides machinery to condition on selected sets arising in linear regression contexts, such as active sets from lasso regression. This result has been extended and applied in many contexts
%, e.g. selective inference following aggregate tests, for changepoint detection,  regression trees and hierarchical clustering, applications in statistical genomics and neuroimaging. 
\citep{lee2014, 
yang2016, 
%reid2017, 
tian2017, 
liu2018, 
hyun2018, 
taylor2018, 
%tibshirani2018, 
%jewell2019,   
%gao2020, 
heller2019, 
%hyun2021, 
%chen2021, 
%neufeld2021, 
panigrahi2021, 
garcia2022, 
zhao2022selective}.
%panigrahi2022b

In the toy example, we can calculate section-adjusted $p$-values by looking at the conditional distribution of the $p$-values under the null. If $i \in S$, we obtain $P_{i|S} = P_i/\lambda$. We will slightly abuse notation throughout the paper, writing $i\in S$ instead of $H_i \in S$ and $P_{i|S}$ for $P_{H_i|S}$; this should cause no confusion. To adjust for the selection, the $p$-value has been multiplied by a factor $1/\lambda$. It is easy to verify that, whenever $i \in S$, we have \[
\mathrm{P}(P_{i|S} \leq t\mid S) = \mathrm{P}(P_i/\lambda \leq t \mid P_i \leq \lambda) = \lambda t/\lambda = t,
\]
so that $P_{i|S}$ fulfils (\ref{eq adjusted p}).

\section{Multiple testing adjustment of selection-adjusted $p$-values}

Having calculated selection-adjusted $p$-values, the usual next step is to decide which of the hypotheses in $S$ can be rejected. A method must be decided for this, be it simply to reject all hypotheses with $p_{H|S}\leq \alpha$ for some $\alpha$, or some more sophisticated multiple testing procedure. Whatever method was chosen, the end result is a random set $R \subseteq S$ of rejected hypotheses.

There are different views on the properties the set $R$ should have, but generally the focus is on avoiding false discoveries. Let 
\[
T_{\mathrm{P}} = \{H \in U\colon \mathrm{P} \in H\}
\]
be the collection of all true hypotheses in $U$. Rejection of $R$ induces $|R \cap T_{\mathrm{P}}|$ false discoveries, giving a false discovery proportion of \[
f_{\mathrm{P}}(R) = \frac{|R \cap T_\mathrm{P}|}{|R| \vee 1}.
\]
To keep false discoveries in check we can control the expectation of some error rate $e_\mathrm{P}(R)$, for which there are many choices \citep{benjamini2010, benjamini2019}, e.g., $e_\mathrm{P}(R) = f_\mathrm{P}(R)$ to control FDR; $e_\mathrm{P}(R) = 1_{f_\mathrm{P}(R)>0}$ to control FWER; or 
$e_\mathrm{P}(R) = 1_{f_\mathrm{P}(R)>\gamma}$ to control FDX-$\gamma$. We assume that $0 \leq e_\mathrm{P}(R) \leq 1$, and that $e_\mathrm{P}(R) =0$ whenever $R \cap T_\mathrm{P} = \emptyset$.

To control a chosen error rate, we bound its expectation by $\alpha$. There are two flavors here. We can control the error rate conditional on $S$, requiring that, for every $\mathrm{P} \in M$ and every $S \in \mathcal{S}$, 
\[
\mathrm{E}_\mathrm{P} [e_\mathrm{P}(R) \mid S ] \leq \alpha,
\]
where $\mathrm{E}_\mathrm{P}(\cdot) = \int_\Omega \cdot\, d\mathrm{P}$ is the expectation corresponding to $\mathrm{P}$. Alternatively, we can aim for unconditional control, requiring that, for every $\mathrm{P} \in M$, 
\[
\mathrm{E}_\mathrm{P} [ e_\mathrm{P}(R) ] \leq \alpha.
\]
Most authors in conditional selective inference advocate control of the conditional error rate \citep{
%sampson2005, 
fithian2014, lee2016, kuffner2018}, though it has been shown that conditioning can sometimes be problematic \citep{kivaranovic2020, kivaranovic2021}.  Other authors, however, have argued for the unconditional error rate, sometimes finding that it leads to more power \citep{wu2010, andrews2019, andrews2022}. Indeed, the conditional error rate is the more stringent one, since conditional control implies unconditional control.

In the toy example, multiple testing is an issue only if $S = \{1,2\}$. If we choose to control FWER at level $\alpha$, we may use the methods of \citet{hochberg1988} or \citet{hommel1988}, which are equivalent in the case of two hypotheses. This method rejects each $H_i$ if $P_{i|S} \leq \alpha/2$, and rejects both hypotheses if $P_{1|S}$ and $P_{2|S}$ are both at most $\alpha$. The resulting procedure is displayed graphically on the left-hand side of Figure \ref{fig example1-1}. Alternatively, we may choose to control FDR. With two hypotheses, the procedure of \cite{benjamini1995} is equivalent to the Hommel/Hochberg-procedure just described, and controls FWER as well as FDR. For controlling FDR we can do uniformly better with the minimally adaptive Benjamini-Hochberg procedure \cite[MABH,\ ][]{solari2017}. In the case of two hypotheses, this procedure also uniformly improves the adaptive procedure of \cite{Benjamini2006}. MABH rejects each $H_i$ if $P_{i|S} \leq \alpha/2$; it rejects both hypotheses if either $P_{1|S}$ and $P_{2|S}$ are both at most $\alpha$, or if the smallest is at most $\alpha/2$ and the largest at most $2\alpha$. It is displayed graphically in the middle part of Figure \ref{fig example1-1}.

\begin{figure}[!ht]
\centering
\def\lam{.7}
\def\restlam{.3}
\def\alp{.3}

\begin{tikzpicture}[scale=3.8]

%%%%%%%%
% Plot A
\draw (0,0) rectangle (1,1);

% rejections
\filldraw[gray!50] (0, \lam) rectangle (\lam*\alp, 1);
\filldraw[gray!50] (\lam, 0) rectangle (1,\lam*\alp) ;
\filldraw[gray!50] (\alp*\lam, 0) rectangle (\lam,\lam*\alp/2);
\filldraw[gray!50] (0,\alp*\lam) rectangle (\lam*\alp/2,\lam);
\filldraw (0,0) rectangle (\lam*\alp,\lam*\alp);
\draw (1,\lam) -- (1.01,\lam) node[right]{$\lambda$};
\draw (1,\lam*\alp) -- (1.01,\lam*\alp) node[right]{$\lambda\alpha$};
\draw (1,\lam*\alp/2) -- (1.01,\lam*\alp/2) node[right]{$\lambda\alpha/2$};

% box
\draw (0,0) -- (1,0) node[midway, below]{$P_1$};
\draw (0,0) -- (0,1) node[midway, left]{$P_2$};
\draw (1, 0) -- (1, 1);
\draw (0,1) -- (1, 1) node[midway, above, yshift=.2cm]{(FWER)};
\draw (0,0) node[below left]{0};
\draw (0,1) node[left]{1};
\draw (1, 0) node[below]{1};

% S
\draw (\lam, 0) -- (\lam, 1) node[below left]{$\{1\}$};
\draw (0,  \lam) -- (1, \lam) node[below left]{$\{2\}$};
\draw (1, 1) node[below left]{$\emptyset$};
\draw (\lam, \lam) node[below left]{$\{1,2\}$};

%%%%%%%%
% Plot B
\begin{scope}[xshift = 1.4 cm]
\draw (0,0) rectangle (1,1);

% rejections
\filldraw[gray!50] (0, \lam) rectangle (\lam*\alp, 1);
\filldraw[gray!50] (\lam, 0) rectangle (1,\lam*\alp) ;
\filldraw[gray!50] (\alp*\lam, 0) rectangle (\lam,\lam*\alp/2);
\filldraw[gray!50] (0,\alp*\lam) rectangle (\lam*\alp/2,\lam);
\filldraw (0,0) rectangle (\lam*\alp,\lam*\alp);
\filldraw (0,0) rectangle (\lam*\alp/2,2*\lam*\alp);
\filldraw (0,0) rectangle (2*\lam*\alp,\lam*\alp/2);
\draw (1,\lam) -- (1.01,\lam) node[right]{$\lambda$};
\draw (1,\lam*\alp) -- (1.01,\lam*\alp) node[right]{$\lambda\alpha$};
\draw (1,\lam*\alp/2) -- (1.01,\lam*\alp/2) node[right]{$\lambda\alpha/2$};
\draw (1,2*\lam*\alp) -- (1.01,2*\lam*\alp) node[right]{$2\lambda\alpha$};

% box
\draw (0,0) -- (1,0) node[midway, below]{$P_1$};
\draw (0,0) -- (0,1) node[midway, left]{$P_2$};
\draw (1, 0) -- (1, 1);
\draw (0,1) -- (1, 1) node[midway, above, yshift=.2cm]{(FDR)};
\draw (0,0) node[below left]{0};
\draw (0,1) node[left]{1};
\draw (1, 0) node[below]{1};

% S
\draw (\lam, 0) -- (\lam, 1) node[below left]{$\{1\}$};
\draw (0,  \lam) -- (1, \lam) node[below left]{$\{2\}$};
\draw (1, 1) node[below left]{$\emptyset$};
\draw (\lam, \lam) node[below left]{$\{1,2\}$};
\end{scope}

%%%%%%%%
% Plot C
\begin{scope}[xshift = 2.8 cm]
\draw (0,0) rectangle (1,1);

% rejections
\filldraw[gray!50] (0, \lam) rectangle (\lam*\alp, 1);
\filldraw[gray!50] (\lam, 0) rectangle (1,\lam*\alp) ;
\filldraw[gray!50] (\alp*\lam, 0) rectangle (\lam,\lam*\alp);
\filldraw[gray!50] (0,\alp*\lam) rectangle (\lam*\alp,\lam);
\filldraw (0,0) rectangle (\lam*\alp,\lam*\alp);
\draw (1,\lam) -- (1.01,\lam) node[right]{$\lambda$};
\draw (1,\lam*\alp) -- (1.01,\lam*\alp) node[right]{$\lambda\alpha$};

% box
\draw (0,0) -- (1,0) node[midway, below]{$P_1$};
\draw (0,0) -- (0,1) node[midway, left]{$P_2$};
\draw (1, 0) -- (1, 1);
\draw (0,1) -- (1, 1) node[midway, above, yshift=.2cm]{(FCR)};
\draw (0,0) node[below left]{0};
\draw (0,1) node[left]{1};
\draw (1, 0) node[below]{1};

% S
\draw (\lam, 0) -- (\lam, 1) node[below left]{$\{1\}$};
\draw (0,  \lam) -- (1, \lam) node[below left]{$\{2\}$};
\draw (1, 1) node[below left]{$\emptyset$};
\draw (\lam, \lam) node[below left]{$\{1,2\}$};
\end{scope}

\end{tikzpicture}
\caption{A simple conditional selective inference procedure for two hypotheses inspired by \cite{zhao2019} and \cite{ellis2020}. The left-hand procedure controls FWER, the middle one FDR, and the right-hand side the FCR-inspired error rate (\ref{eq FCR}). The sets displayed in the upper right corner of each quadrant are the realisation of $S$ in that quadrant. Grey indicates areas in which one hypothesis is rejected; black indicates areas in which both are rejected. The plot uses $\lambda=0.7$ and $\alpha=0.3$.}
\label{fig example1-1}
\end{figure}

So far we have assumed that the error rate only depends on $R$, but not on $S$. This assumption excludes the rate 
\begin{equation} \label{eq FCR}
e_\mathrm{P}(R, S) = \frac{|R \cap T_\mathrm{P}|}{|S| \vee 1},
\end{equation}
that is implied by inference based on confidence intervals controlling the False Coverage Rate \citep[FCR,~][]{benjamini2005}. This is also the rate that is controlled if we do no further multiple testing adjustment on the selection-adjusted $p$-values, but simply reject $R = \{i: P_{i|S} \leq \alpha\}$. This procedure is given on the right-hand side of Figure \ref{fig example1-1}. In the next few sections we will assume that the error rate is a function of $R$ only, but we return to $S$-dependent error rates in Section \ref{sec FCR testing}.

\section{A holistic perspective and main observation} \label{sec holistic}

The approaches described in the previous two sections can be seen as two-stage methods. First, from a universe $U$ of hypotheses a selection $S \subseteq U$ is made. Next, within that selection some hypotheses are rejected, while others are not, and we return $R \subseteq S$. The set $R$ is the final result of any method; it is the set we make inferential claims about. 

Rather than analyzing the two steps $U \to S$ and $S \to R$ separately, in this paper we will take a holistic perspective, viewing the two steps together as a single method $U \to S \to R$, or briefly $U \to R$. By viewing the two steps together we stress that the selection step $U \to S$ and the rejection step $S \to R$ are in the hands of the same analyst. The analyst chooses a method for the selection step $U \to S$ and a method for the inference step $S \to R$. The analyst also chooses what part of the information in the data to spend for the selection step, and what part of the data to reserve for the inference step. 

In the holistic perspective, the choice of $S$, in a procedure $U \to S \to R$, is, therefore, part of the method, and this part may be optimized. The holistic perspective implies that such optimization should be focused on obtaining a larger or more useful set $R$, since $R$, not $S$, represents the final inference of the method. In general, we would like to have as many rejections as possible, while keeping the chosen error rate under control. Moreover, from the holistic perspective all rejections of hypotheses in $U$ are welcome, since every hypothesis in $U$ could have been in $S$.

In the toy example, we can visualize the holistic view of the three procedures simply by removing all reference to $S$ in Figure \ref{fig example1-1}, as shown in Figure \ref{fig example1-holistic}. This now displays three single-step procedures, defined directly on the universe $U = \{1,2\}$, and based on the non-selection-unadjusted $P_1$ and $P_2$. The rejected sets $R$ for the procedures in Figure \ref{fig example1-holistic} are trivially identical to those of their counterparts in Figure \ref{fig example1-1}. However, in the holistic perspective of Figure \ref{fig example1-holistic}, the $\lambda$ that previously determined $S$ now becomes a tuning parameter, freely to be chosen by the analyst before seeing the data. The holistic perspective de-emphasizes the importance of $S$.

\begin{figure}[!ht]
\centering
\def\lam{.7}
\def\restlam{.3}
\def\alp{.3}

\begin{tikzpicture}[scale=3.8]

%%%%%%%%
% Plot A
\draw (0,0) rectangle (1,1);

% rejections
\filldraw[gray!50] (0, \lam) rectangle (\lam*\alp, 1);
\filldraw[gray!50] (\lam, 0) rectangle (1,\lam*\alp) ;
\filldraw[gray!50] (\alp*\lam, 0) rectangle (\lam,\lam*\alp/2);
\filldraw[gray!50] (0,\alp*\lam) rectangle (\lam*\alp/2,\lam);
\filldraw (0,0) rectangle (\lam*\alp,\lam*\alp);
\draw (1,\lam) -- (1.01,\lam) node[right]{$\lambda$};
\draw (1,\lam*\alp) -- (1.01,\lam*\alp) node[right]{$\lambda\alpha$};
\draw (1,\lam*\alp/2) -- (1.01,\lam*\alp/2) node[right]{$\lambda\alpha/2$};

% box
\draw (0,0) -- (1,0) node[midway, below]{$P_1$};
\draw (0,0) -- (0,1) node[midway, left]{$P_2$};
\draw (1, 0) -- (1, 1);
\draw (0,1) -- (1, 1) node[midway, above, yshift=.2cm]{(FWER)};
\draw (0,0) node[below left]{0};
\draw (0,1) node[left]{1};
\draw (1, 0) node[below]{1};

%%%%%%%%
% Plot B
\begin{scope}[xshift = 1.4 cm]
\draw (0,0) rectangle (1,1);

% rejections
\filldraw[gray!50] (0, \lam) rectangle (\lam*\alp, 1);
\filldraw[gray!50] (\lam, 0) rectangle (1,\lam*\alp) ;
\filldraw[gray!50] (\alp*\lam, 0) rectangle (\lam,\lam*\alp/2);
\filldraw[gray!50] (0,\alp*\lam) rectangle (\lam*\alp/2,\lam);
\filldraw (0,0) rectangle (\lam*\alp,\lam*\alp);
\filldraw (0,0) rectangle (\lam*\alp/2,2*\lam*\alp);
\filldraw (0,0) rectangle (2*\lam*\alp,\lam*\alp/2);
\draw (1,\lam) -- (1.01,\lam) node[right]{$\lambda$};
\draw (1,\lam*\alp) -- (1.01,\lam*\alp) node[right]{$\lambda\alpha$};
\draw (1,\lam*\alp/2) -- (1.01,\lam*\alp/2) node[right]{$\lambda\alpha/2$};
\draw (1,2*\lam*\alp) -- (1.01,2*\lam*\alp) node[right]{$2\lambda\alpha$};

% box
\draw (0,0) -- (1,0) node[midway, below]{$P_1$};
\draw (0,0) -- (0,1) node[midway, left]{$P_2$};
\draw (1, 0) -- (1, 1);
\draw (0,1) -- (1, 1) node[midway, above, yshift=.2cm]{(FDR)};
\draw (0,0) node[below left]{0};
\draw (0,1) node[left]{1};
\draw (1, 0) node[below]{1};

\end{scope}

%%%%%%%%
% Plot C
\begin{scope}[xshift = 2.8 cm]
\draw (0,0) rectangle (1,1);

% rejections
\filldraw[gray!50] (0, \lam) rectangle (\lam*\alp, 1);
\filldraw[gray!50] (\lam, 0) rectangle (1,\lam*\alp) ;
\filldraw[gray!50] (\alp*\lam, 0) rectangle (\lam,\lam*\alp);
\filldraw[gray!50] (0,\alp*\lam) rectangle (\lam*\alp,\lam);
\filldraw (0,0) rectangle (\lam*\alp,\lam*\alp);
\draw (1,\lam*\alp) -- (1.01,\lam*\alp) node[right]{$\lambda\alpha$};

% box
\draw (0,0) -- (1,0) node[midway, below]{$P_1$};
\draw (0,0) -- (0,1) node[midway, left]{$P_2$};
\draw (1, 0) -- (1, 1);
\draw (0,1) -- (1, 1) node[midway, above, yshift=.2cm]{(FCR)};
\draw (0,0) node[below left]{0};
\draw (0,1) node[left]{1};
\draw (1, 0) node[below]{1};

% S
\end{scope}

\end{tikzpicture}
\caption{Holistic perspective on the procedures of Figure \ref{fig example1-1}. Grey indicates areas in which one hypothesis is rejected; black indicates areas in which both are rejected.}
\label{fig example1-holistic}
\end{figure}
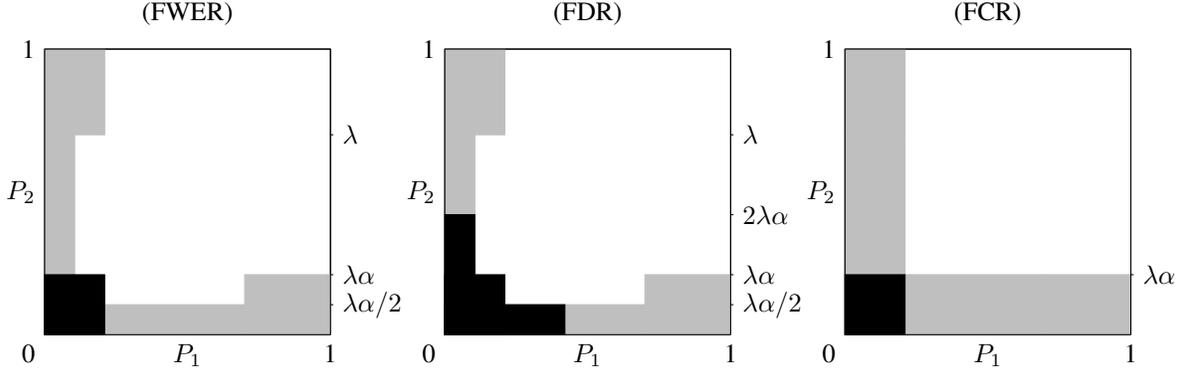

Viewed from the holistic perspective, we see that $S$ plays two distinct roles in conditional selective inference. In the first place, $S$ focuses the attention of the multiple testing procedure to hypotheses in $S$, restricting $R$ to be a subset of $S$. This is the selective property of the procedure. Secondly, by conditioning on $S$, the procedure ignores the information used to find $S$ for the final inference. This is the conditional property of the procedure. 
We see both roles of $S$ in the procedures of the toy example in Figure \ref{fig example1-1}. The procedure never rejects hypotheses outside $S$, so it is selective. We can see that the procedures is conditional, because the procedure in each $S$-defined quadrant is a valid multiple testing procedure by itself: if we would stretch any quadrant to cover the entire unit square, we would obtain a method with valid FWER, FDR or FCR control, respectively. 

The holistic perspective allows us to decouple the selective and conditional properties of conditional selective inference. We call a procedure $U \to R$ \emph{selective on $S'$} if, surely for all $\mathrm{P} \in M$, $R \subseteq S'$. We call $U \to R$ \emph{conditional on $S''$} if it controls its error rate conditionally on $S''$, i.e., if, surely, $\mathrm{E}_\mathrm{P} [e_\mathrm{P}(R, S'') \mid S'' ] \leq \alpha$. By design, a conditional selective procedure $U \to S \to R$ is selective on $S$ and conditional on $S$. However, the same procedure may be selective or conditional on sets it was not constructed around. Procedures are always selective on sets that are surely larger than $S$, and every procedure is, trivially, selective on $R$. Every procedure that is conditional on $S$ is also conditional on $U \setminus S$, since $S$ and $U \setminus S$ carry the same information. In Figure \ref{fig example1-holistic} we may verify that all three procedures are conditional and selective on, for example, $S' = \{i\colon P_i \leq (1+\lambda)/2\}$. 

In an important special case, every procedure is selective on $U$, since $R \subseteq U$ by definition. Moreover, every procedure is conditional on $U$, since the conditional error rate for $U$ is the unconditional error rate, and control of any conditional error rate implies control of the unconditional error rate. This brings us to our first main observation: For every conditional selective multiple testing procedure on $S$ there exists a conditional selective procedure on $U$, i.e.\ an unconditional, non-selective procedure, that always rejects at least as many hypotheses. 

\begin{observation} \label{thm main}
Let $U \to S \to R$ be a conditional selective inference procedure with the property that $R \subseteq S$ surely, and that $\mathrm{E}_\mathrm{P}[e_\mathrm{P}(R)\mid S] \leq \alpha$, surely, for all $\mathrm{P} \in M$. Then there exists a procedure $U \to R'$ such that $R' \supseteq R$ surely, and $\mathrm{E}_\mathrm{P}[e_\mathrm{P}(R')]  = \mathrm{E}_\mathrm{P}[e_\mathrm{P}(R')\mid U] \leq \alpha$ for all $\mathrm{P} \in M$. \end{observation}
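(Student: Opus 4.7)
The plan is to show that the trivial choice $R' = R$ already satisfies all the required properties, so that no new procedure needs to be constructed; the whole content of the observation lies in a change of perspective rather than in a technical construction. The containment $R' \supseteq R$ is then immediate, and I only need to verify that the unconditional error rate of $R$ is bounded by $\alpha$, and to justify the claim that this equals the error rate conditional on $U$.

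First I would recall that, by hypothesis, $\mathrm{E}_\mathrm{P}[e_\mathrm{P}(R)\mid S] \leq \alpha$ surely for every $\mathrm{P}\in M$. Applying the tower property of conditional expectation gives
\[
\mathrm{E}_\mathrm{P}[e_\mathrm{P}(R)] = \mathrm{E}_\mathrm{P}\!\bigl[\mathrm{E}_\mathrm{P}[e_\mathrm{P}(R)\mid S]\bigr] \leq \mathrm{E}_\mathrm{P}[\alpha] = \alpha,
\]
which is the desired unconditional bound. Setting $R' = R$, the containment $R' \supseteq R$ holds trivially, and the selectivity on $U$ is automatic since $R \subseteq S \subseteq U$ by the definition of $U$.

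Second I would address the equality $\mathrm{E}_\mathrm{P}[e_\mathrm{P}(R')] = \mathrm{E}_\mathrm{P}[e_\mathrm{P}(R')\mid U]$. Here the point is that, as noted just before the observation, $U$ is a fixed (deterministic) set, not a random object depending on the data. Conditioning on a deterministic quantity is trivial, so the sigma-algebra generated by $U$ is the trivial sigma-algebra and $\mathrm{E}_\mathrm{P}[\,\cdot\mid U] = \mathrm{E}_\mathrm{P}[\,\cdot\,]$. This turns the unconditional bound just derived into the stated conditional-on-$U$ bound.

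There is essentially no obstacle beyond recognizing that the two-step procedure $U \to S \to R$, viewed as a single map $U \to R$, is itself a valid ``unconditional, non-selective on $U$'' procedure. The only thing worth being careful about is the measurability needed for the tower property (so that $\mathrm{E}_\mathrm{P}[e_\mathrm{P}(R)\mid S]$ is a well-defined random variable), which is implicit in the standing setup in which $S$ is a function of $X$ and $R$ is a function of $(X,S)$. Once this is in place, the argument is a one-line application of the tower rule plus the observation that $U$ is nonrandom.
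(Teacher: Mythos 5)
Your proposal is correct and matches the paper's own argument exactly: the paper also takes $R'=R$ and applies the tower property $\mathrm{E}_\mathrm{P}[e_\mathrm{P}(R)] = \mathrm{E}_\mathrm{P}[\mathrm{E}_\mathrm{P}\{e_\mathrm{P}(R)\mid S\}]$, with the conditioning on the fixed set $U$ being trivial. Your additional remarks on measurability and on $U$ being deterministic are sound but add nothing beyond what the paper leaves implicit.
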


To prove Observation \ref{thm main}, simply take $R' = R$ and observe that $\mathrm{E}_\mathrm{P}[e_\mathrm{P}(R)] = \mathrm{E}_\mathrm{P}[\mathrm{E}_\mathrm{P}\{e_\mathrm{P}(R) \mid S\}]$. We call Observation \ref{thm main} an observation rather than a theorem or proposition, because as a mathematical result it is completely trivial: if we do not restrict to $R\subseteq S$ but allow the method also to reject hypotheses in $U \setminus S$, it may achieve more rejections that way; if we do not condition on $S$, we retain more information for finding a possibly larger $R$. Observation \ref{thm main} is merely an immediate consequence of the holistic perspective we have adopted.

However, Observation 1 answers the important question how much of the information in the data to allocate to the selection step $U\to S$ and how much to the rejection step $S \to R$. According to Observation \ref{thm main}, the optimal choice is always simply to take $S=U$. Without losing power, we can allocate zero information to the selection step, and retain all of our information for the rejection step. This is an important insight.

\section{First example: the toy example} \label{sec toy}

Observation 1 says that a holistic method $U \to R'$ always exists that is at least as powerful, in the sense that $R' \supseteq R$, as a conditional selective procedure $U \to S \to R$. However, it does not show that it is always possible to achieve a true improvement, nor does it show how to find such an improvement if it exists. However, there are many cases in which substantial improvement over a conditional selective procedure is possible. 

In this section we will illustrate this with the toy example of Figure \ref{fig example1-1}, focusing on its FDR-controlling variant. The toy example will help to build an intuition for the general case. As a preview, Figure \ref{fig example1-fdr} displays the FDR-controlling conditional selective procedure (top-left), with two uniform improvements top-right and bottom-left. The bottom-left procedure is not selective on $S$, sometimes rejecting hypotheses outside $S$, but still controls FDR conditional on $S$. The top-right procedure still selective on $S$, guaranteeing $R \subseteq S$, but only has unconditional FDR control. The standard MABH procedure is given at bottom right for comparison. 

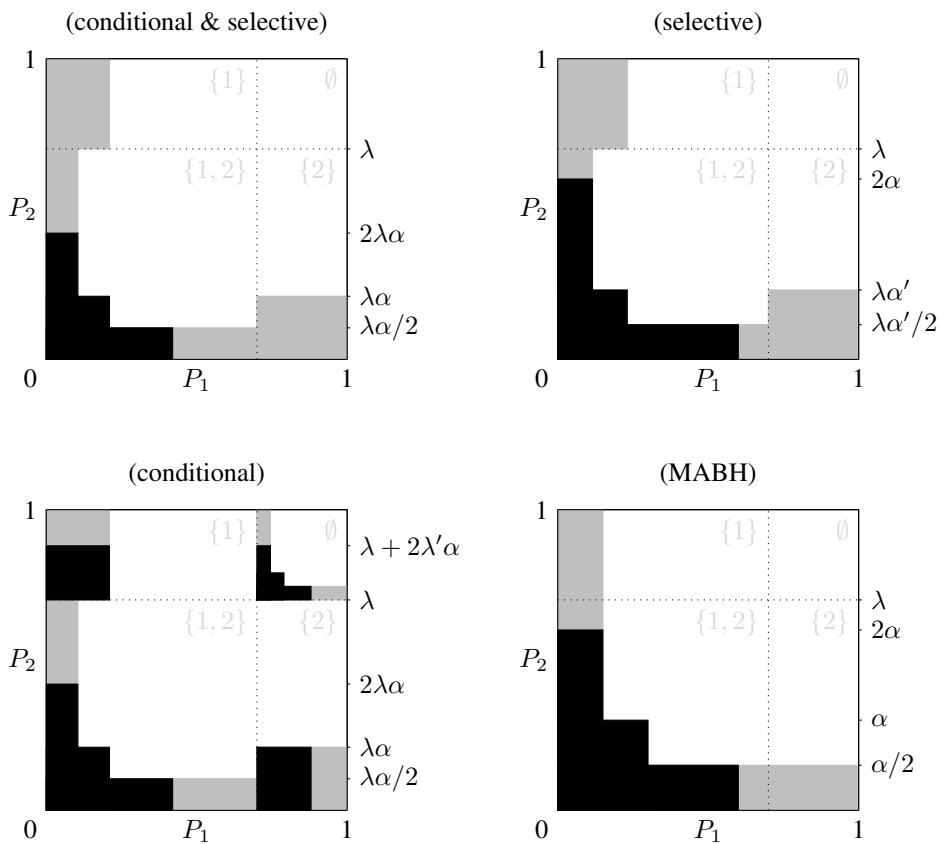
\begin{figure}[!ht]
\centering
\def\lam{.7}
\def\restlam{.3}
\def\alp{.3}
\def\factor{1.1} % 1-\restlam * \restlam

\begin{tikzpicture}[scale=4]

%%%%%%%%
% Plot A
\begin{scope}[yshift = -1.5 cm]
\draw (0,0) rectangle (1,1);

% rejections
\filldraw[gray!50] (0, \lam) rectangle (\lam*\alp, 1);
\filldraw[gray!50] (\lam, 0) rectangle (1,\lam*\alp) ;
\filldraw[gray!50] (\alp*\lam, 0) rectangle (\lam,\lam*\alp/2);
\filldraw[gray!50] (0,\alp*\lam) rectangle (\lam*\alp/2,\lam);
\filldraw[gray!50] (\lam,\lam) rectangle (1, \lam+\restlam*\alp/2);
\filldraw[gray!50] (\lam,\lam) rectangle (\lam+\restlam*\alp/2, 1);
\filldraw (0,0) rectangle (\lam*\alp,\lam*\alp);
\filldraw (0,0) rectangle (\lam*\alp/2,2*\lam*\alp);
\filldraw (0,0) rectangle (2*\lam*\alp,\lam*\alp/2);
\filldraw (\lam,\lam) rectangle (\lam +\restlam*\alp,\lam + \restlam*\alp);
\filldraw (\lam,\lam) rectangle (\lam + \restlam*\alp/2,\lam + 2*\restlam*\alp);
\filldraw (\lam,\lam) rectangle (\lam + 2*\restlam*\alp,\lam + \restlam*\alp/2);
\filldraw (\lam,0) rectangle (\lam +2*\restlam*\alp,\lam*\alp);
\filldraw (0,\lam) rectangle (\lam*\alp,\lam + 2*\restlam*\alp);
\draw (1,\lam) -- (1.01,\lam) node[right]{$\lambda$};
\draw (1,\lam*\alp) -- (1.01,\lam*\alp) node[right]{$\lambda\alpha$};
\draw (1,\lam*\alp/2) -- (1.01,\lam*\alp/2) node[right]{$\lambda\alpha/2$};
\draw (1,2*\lam*\alp) -- (1.01,2*\lam*\alp) node[right]{$2\lambda\alpha$};
\draw (1,\lam + 2*\restlam*\alp) -- (1.01,\lam + 2*\restlam*\alp) node[right]{$\lambda+2\lambda'\alpha$};

% box
\draw (0,0) -- (1,0) node[midway, below]{$P_1$};
\draw (0,0) -- (0,1) node[midway, left]{$P_2$};
\draw (1, 0) -- (1, 1);
\draw (0,1) -- (1, 1) node[midway, above, yshift=.2cm]{(conditional)};
\draw (0,0) node[below left]{0};
\draw (0,1) node[left]{1};
\draw (1, 0) node[below]{1};

% S
\draw[dotted] (\lam, 0) -- (\lam, 1) node[below left, gray!30]{$\{1\}$};
\draw[dotted] (0,  \lam) -- (1, \lam) node[below left, gray!30]{$\{2\}$};
\draw[dotted] (1, 1) node[below left, gray!30]{$\emptyset$};
\draw[dotted] (\lam, \lam) node[below left, gray!30]{$\{1,2\}$};
\end{scope}

%%%%%%%%
% Plot B
\begin{scope}[xshift = 0 cm]
\draw (0,0) rectangle (1,1);

% rejections
\filldraw[gray!50] (0, \lam) rectangle (\lam*\alp, 1);
\filldraw[gray!50] (\lam, 0) rectangle (1,\lam*\alp) ;
\filldraw[gray!50] (\alp*\lam, 0) rectangle (\lam,\lam*\alp/2);
\filldraw[gray!50] (0,\alp*\lam) rectangle (\lam*\alp/2,\lam);
\filldraw (0,0) rectangle (\lam*\alp,\lam*\alp);
\filldraw (0,0) rectangle (\lam*\alp/2,2*\lam*\alp);
\filldraw (0,0) rectangle (2*\lam*\alp,\lam*\alp/2);
\draw (1,\lam) -- (1.01,\lam) node[right]{$\lambda$};
\draw (1,\lam*\alp) -- (1.01,\lam*\alp) node[right]{$\lambda\alpha$};
\draw (1,\lam*\alp/2) -- (1.01,\lam*\alp/2) node[right]{$\lambda\alpha/2$};
\draw (1,2*\lam*\alp) -- (1.01,2*\lam*\alp) node[right]{$2\lambda\alpha$};

% box
\draw (0,0) -- (1,0) node[midway, below]{$P_1$};
\draw (0,0) -- (0,1) node[midway, left]{$P_2$};
\draw (1, 0) -- (1, 1);
\draw (0,1) -- (1, 1) node[midway, above, yshift=.2cm]{(conditional \&\ selective)};
\draw (0,0) node[below left]{0};
\draw (0,1) node[left]{1};
\draw (1, 0) node[below]{1};

% S
\draw[dotted] (\lam, 0) -- (\lam, 1) node[below left, gray!30]{$\{1\}$};
\draw[dotted] (0,  \lam) -- (1, \lam) node[below left, gray!30]{$\{2\}$};
\draw[dotted] (1, 1) node[below left, gray!30]{$\emptyset$};
\draw[dotted] (\lam, \lam) node[below left, gray!30]{$\{1,2\}$};
\end{scope}

%%%%%%%%
% Plot C
\begin{scope}[xshift = 1.7 cm]
\draw (0,0) rectangle (1,1);

% rejections
\filldraw[gray!50] (0, \lam) rectangle (\factor*\lam*\alp, 1);
\filldraw[gray!50] (\lam, 0) rectangle (1,\factor*\lam*\alp) ;
\filldraw[gray!50] (\alp*\lam, 0) rectangle (\lam,\factor*\lam*\alp/2);
\filldraw[gray!50] (0,\alp*\lam) rectangle (\factor*\lam*\alp/2,\lam);
\filldraw (0,0) rectangle (\factor*\lam*\alp,\factor*\lam*\alp);
\filldraw (0,0) rectangle (\factor*\lam*\alp/2,2*\alp);
\filldraw (0,0) rectangle (2*\alp,\factor*\lam*\alp/2);
\draw (1,\lam) -- (1.01,\lam) node[right]{$\lambda$};
\draw (1,\factor*\lam*\alp) -- (1.01,\factor*\lam*\alp) node[right]{$\lambda\alpha'$};
\draw (1,\factor*\lam*\alp/2) -- (1.01,\factor*\lam*\alp/2) node[right]{$\lambda\alpha'/2$};
\draw (1,2*\alp) -- (1.01,2*\alp) node[right]{$2\alpha$};

% box
\draw (0,0) -- (1,0) node[midway, below]{$P_1$};
\draw (0,0) -- (0,1) node[midway, left]{$P_2$};
\draw (1, 0) -- (1, 1);
\draw (0,1) -- (1, 1) node[midway, above, yshift=.2cm]{(selective)};
\draw (0,0) node[below left]{0};
\draw (0,1) node[left]{1};
\draw (1, 0) node[below]{1};

% S
\draw[dotted] (\lam, 0) -- (\lam, 1) node[below left, gray!30]{$\{1\}$};
\draw[dotted] (0,  \lam) -- (1, \lam) node[below left, gray!30]{$\{2\}$};
\draw[dotted] (1, 1) node[below left, gray!30]{$\emptyset$};
\draw[dotted] (\lam, \lam) node[below left, gray!30]{$\{1,2\}$};
\end{scope}

%%%%%%%%
% Plot D
\begin{scope}[xshift = 1.7 cm, yshift=-1.5cm]
\draw (0,0) rectangle (1,1);

% rejections
\filldraw[gray!50] (0, 0) rectangle (\alp/2, 1);
\filldraw[gray!50] (0, 0) rectangle (1,\alp/2) ;
\filldraw (0,0) rectangle (\alp,\alp);
\filldraw (0,0) rectangle (\alp/2,2*\alp);
\filldraw (0,0) rectangle (2*\alp,\alp/2);
\draw (1,\lam) -- (1.01,\lam) node[right]{$\lambda$};
\draw (1,\alp) -- (1.01,\alp) node[right]{$\alpha$};
\draw (1,\alp/2) -- (1.01,\alp/2) node[right]{$\alpha/2$};
\draw (1,2*\alp) -- (1.01,2*\alp) node[right]{$2\alpha$};

% box
\draw (0,0) -- (1,0) node[midway, below]{$P_1$};
\draw (0,0) -- (0,1) node[midway, left]{$P_2$};
\draw (1, 0) -- (1, 1);
\draw (0,1) -- (1, 1) node[midway, above, yshift=.2cm]{(MABH)};
\draw (0,0) node[below left]{0};
\draw (0,1) node[left]{1};
\draw (1, 0) node[below]{1};

% S
\draw[dotted] (\lam, 0) -- (\lam, 1) node[below left, gray!30]{$\{1\}$};
\draw[dotted] (0,  \lam) -- (1, \lam) node[below left, gray!30]{$\{2\}$};
\draw[dotted] (1, 1) node[below left, gray!30]{$\emptyset$};
\draw[dotted] (\lam, \lam) node[below left, gray!30]{$\{1,2\}$};
\end{scope}
\end{tikzpicture}
\caption{The conditional selective procedure of the toy example, controlling FDR, with its conditional and selective improvements. The MABH procedure is given as reference. Grey indicates areas in which one hypothesis is rejected; black indicates areas in which both are rejected. Here, $\lambda' = 1-\lambda$, and $\alpha' = \alpha/(2\lambda - \lambda^2)$.}
\label{fig example1-fdr}
\end{figure}

How did we arrive at these improvements? For the conditional improvement (bottom left), we keep aiming for control of FDR conditional on $S$, but we allow the procedure to reject hypotheses in $U \setminus S$. To do this, we also calculate selection-adjusted $p$-values $P_{i|S}$ for $i \notin S$. We obtain
\begin{equation} \label{eq conditional P}
P_{i|S} = \left\{ \begin{array}{ll} P_i/\lambda & \textrm{if $i \in S$} \\
(P_i-\lambda)/(1-\lambda) & \textrm{if $i \notin S$.} \end{array} \right. 
\end{equation}
While the selection-adjusted $p$-values are larger than the original ones for $i \in S$, the reverse is true when $i \notin S$. Next, we extend the procedure by continuing to test hypotheses in $U\setminus S$ after all hypotheses in $S$ are rejected. If $S = \{1,2\}$ the procedure is not changed. If $S= \{1\}$ and $H_1$ was rejected, we may continue to test $H_2$, rejecting when $P_{2|S=\{1\}}\leq 2\alpha$, and analogous for $S= \{2\}$. This fixed-sequence procedure (conditional on $S=\{1\}$) is easily seen to be valid for FDR control, and is related to fixed-sequence FDR-controlling procedures by \cite{farcomeni2013} and \cite{lynch2017control}. If $S = \emptyset$, rather than rejecting nothing, we may use a MABH procedure on $P_{1|S=\emptyset}$ and $P_{2|S=\emptyset}$. 

The resulting procedure, quite a strange one, is given at bottom-left in Figure \ref{fig example1-fdr}. It consists of four miniature multiple testing procedures, applied to conditional $p$-values, and valid conditional on $S$ for the four realizations of $S$. For $S=\{1,2\}$ and $S=\emptyset$ we have a conditional MABH; for $S=\{1\}$ and $S=\{2\}$ was have a fixed-sequence FDR-controlling procedure, prioritizing the hypothesis in $S$. The resulting procedure clearly uniformly improves the procedure of Figure \ref{fig example1-1}. It does this by also considering hypotheses outside $S$ for rejection. However, the improved procedure retains the property that it controls FDR conditional on $S$, since each of the miniature procedures is valid for FDR control.

A different type of improvement may be achieved if we are willing to give up on conditional FDR control. This is shown in the top-right of Figure \ref{fig example1-fdr}. The improvement comes in two parts. First, we remark that the original procedure does not exhaust the $\alpha$-level under the global null hypothesis: if $H_1 \cap H_2$ is true, FDR is controlled at level $(2\lambda-\lambda^2)\alpha$. We can therefore gain power by starting the procedure at level $\alpha' = \alpha/(2\lambda-\lambda^2)$ instead of at $\alpha$. Secondly, after the original procedure has rejected $H_1$, it rejects $H_2$ if $P_{2|S=\{1,2\}} \leq 2\alpha$, i.e., when $P_{2} \leq 2\lambda\alpha$. If we are not doing conditional control, however, there is no need to use the conditional $p$-value, and we may alternatively reject $H_2$, after we have rejected $H_1$, simply if $P_{2} \leq 2\alpha$. The procedure resulting from these two improvements is given at the top-right of Figure \ref{fig example1-fdr}. The procedure's FDR control is not conditional on $S$ anymore, but it remains selective on $S$, assuming $\lambda \geq 2\alpha$. The validity of this new procedure may not be immediately obvious; we prove it in in the following lemma.

\begin{lemma}
Suppose $P_1, P_2$ are independent and standard uniform under $H_1, H_2$, respectively. Without loss of generality, assume that $P_1 \leq P_2$. Let $0\leq \lambda \leq 1$ and $\alpha' = \alpha/(2\lambda - \lambda^2)$. Define a procedure that that rejects $H_1$ when $P_1 \leq \lambda\alpha'/2$, or when $P_1 \leq \lambda\alpha'$ and $P_2 \leq \lambda\alpha'$, or when $P_1 \leq \lambda\alpha'$ and $P_2 > \lambda$, and that rejects $H_2$ when $H_1$ is rejected and $P_2 \leq 2 \alpha$. This procedure controls FDR at level $\alpha$.
\end{lemma}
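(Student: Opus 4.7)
The plan is to bound $\mathrm{FDR}$ by conditioning on the set $T \subseteq \{1,2\}$ of true hypotheses. The case $T=\emptyset$ makes $\mathrm{FDP}\equiv 0$, and since the procedure is symmetric under the simultaneous swap $(P_1,P_2)\leftrightarrow(P_2,P_1)$, $(H_1,H_2)\leftrightarrow(H_2,H_1)$, the two singleton cases $T=\{1\}$ and $T=\{2\}$ contribute equally. Only $T=\{1,2\}$ and $T=\{1\}$ then require separate treatment.

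For the global null $T=\{1,2\}$, I would use that $H_2$ is rejected only when $H_1$ is, so any rejection makes $\mathrm{FDP}=1$ and $\mathrm{FDR}=\mathrm{P}(H_{(1)}\text{ rejected})$, where subscripts denote order statistics. Partitioning this event into the three disjoint pieces corresponding to clauses (a), (b), (c) of the procedure, I would compute each as a rectangle area under the uniform distribution on $[0,1]^2$ and verify that the total equals $\lambda\alpha'(2-\lambda)$, which is exactly $\alpha$ by the defining identity $\alpha'=\alpha/(2\lambda-\lambda^2)$.

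For $T=\{1\}$, write $\mathrm{FDR}=\int g(p_2)\,dF_2(p_2)$ with $g(p_2)=\mathrm{E}[V/R\mid P_2=p_2]$ and $F_2$ the arbitrary distribution of $P_2$; it then suffices to show $g(p_2)\le\alpha$ pointwise. Fixing $p_2$, I would condition on whether $P_1\le p_2$ (making $H_1=H_{(1)}$) or $P_1>p_2$ (making $H_1=H_{(2)}$); the rejection rules render $V/R$ piecewise constant in $P_1$ with values in $\{0,\tfrac12,1\}$, integrating explicitly against the uniform density. Sweeping $p_2$ across the four thresholds $\lambda\alpha'/2,\ \lambda\alpha',\ 2\alpha,\ \lambda$ (interleaving in this order when $\lambda\ge 2\alpha$, the regime flagged earlier for the selective property), each subinterval yields an explicit $g(p_2)$, all bounded by $\alpha$ via $\lambda\alpha'=\alpha/(2-\lambda)\le\alpha$ for $\lambda\le 1$. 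The worst case saturating the bound is $p_2\le\lambda\alpha'/2$: there $H_2$ is rejected automatically by clause (a) and $H_1$ joins it exactly when $P_1\le 2\alpha$, giving $g(p_2)=\tfrac12\,\mathrm{P}(P_1\le 2\alpha)=\alpha$.

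The main obstacle is bookkeeping rather than any conceptual subtlety: several rectangular sub-regions of $[0,1]^2$ must be tracked across the two orderings of $P_1$ vs.\ $p_2$ and the subintervals of $p_2$, with care about the relative position of $2\alpha$ versus $\lambda$ and about which original hypothesis plays the role of $H_{(1)}$. All the resulting bounds collapse simultaneously via the single identity $\lambda\alpha'(2-\lambda)=\alpha$, so the argument is long but entirely mechanical.
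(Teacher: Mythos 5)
Your proposal is correct and follows essentially the same route as the paper's proof: a case split on the number of true null hypotheses, an exact probability computation under the global null yielding $\lambda\alpha'(2-\lambda)=\alpha$, and, for one true null, conditioning on the non-null $p$-value and exploiting independence and the uniformity of the null one. The only substantive difference is that in the one-true-null case the paper conditions merely on the event $\{P_2\le 2\alpha\}$ versus its complement and finishes with two short inequalities, whereas you condition on the exact value of $P_2$ and sweep across all thresholds --- more bookkeeping for the same bounds, though it does correctly pin down the tight case $p_2\le\lambda\alpha'/2$; just be aware that the three rejection clauses are not disjoint as stated (clause (a) overlaps (b) and (c)) and that events like $\{P_{(1)}\le c\}$ are L-shaped rather than rectangular in the unordered $(P_1,P_2)$ square, so the global-null computation needs inclusion--exclusion rather than a sum of rectangle areas.
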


\begin{proof}
We prove FDR control separately the cases that 2, 1, or 0 null hypotheses are true. We remark that the procedure is visualized on the right-hand side of Figure \ref{fig example1-fdr}.

Suppose $H_1$ and $H_2$ are true. Then the false discovery proportion is 1 whenever at least one rejection occurs. Since $P_1$ and $P_2$ are independent and standard uniform, this probability can be checked to be
\[
2(1-\lambda)\lambda\alpha' + 2\lambda \lambda \alpha' /2 - (\lambda \alpha'/2)^2 + (\lambda \alpha'/2)^2 =
\alpha.
\]
so FDR is $\alpha$.

Suppose $H_1$ is true, but $H_2$ is not. Then the false dicovery proportion is 1/2 in the black area of Figure \ref{fig example1-fdr}, 0 in the upper left grey area, and 1 in the lower right grey area. If $P_2 \leq 2\alpha$, then $\mathrm{E}_\mathrm{P}(\mathrm{FDP}|P_2\leq 2\alpha) \leq 1/2 \mathrm{P}(P_1\leq 2\alpha|P_2\leq 2\alpha) = (1/2)\cdot 2\alpha = \alpha$. If $P_2 > 2\alpha$, then $\mathrm{E}(\mathrm{FDP}|P_2> 2\alpha) \leq \mathrm{P}(P_1\leq \lambda\alpha'|P_2 > 2\alpha) = \lambda\alpha' = \lambda\alpha/(2\lambda-\lambda^2) \leq \alpha$. It follows that FDR is at most $\alpha$. The case that $H_2$ is true, but $H_1$ is not is analogous. 

If $H_1$ and $H_2$ are both false then FDP is always 0 and there is nothing to prove.
\end{proof}

We have constructed two improvements of the conditional selective procedure we started with. One of the procedures retains the property of the original procedure that it controls FDR conditional on $S$, the second retains the property that it only rejects hypotheses in $S$. The holistic perspective, however, does not care about $S$ or about properties relating to $S$. It sees these two new methods simply as uniform improvements of the original that never reject fewer hypotheses and sometimes more. One of these, the bottom-left one, is arguably somewhat weird and difficult to motivate from a holistic perspective (compare \cite{berger1989uniformly}'s tests improving the likelihood ratio test and \citeauthor{perlman1999emperor}'s (\citeyear{perlman1999emperor}) discussion); the top-right one seems more reasonable.

As a fourth procedure, bottom-right in Figure \ref{fig example1-fdr}, we have given the regular MABH procedure, that does not attempt to be conditional or selective on $S$. This might be the procedure we would have chosen if we would have adopted a holistic perspective from the beginning. In this particular case, MABH actually happens to be selective on $S$ (as long as $\lambda \geq 2\alpha$). Comparing the conditional procedure (bottom-left) to MABH, we see a massive shift of power away from $S=\{1,2\}$ towards $S=\{1\}$, $S=\{2\}$, and $S=\emptyset$. Comparing the selective procedure (top-right) to MABH, we see that, while both procedures are selective, the original MABH still focuses relatively more power on $S=\{1,2\}$; the top-right procedure still has a relatively large focus on small sets $S$. This focus actually chimes with the motivation of the procedure we started from: \cite{zhao2019} and \cite{ellis2020} advocated their method for an application context in which null $p$-values tend to be near 1, so that $S=\{1\}$ and $S=\{2\}$ are relatively likely.  

The comparison with MABH also serves to illustrate that uniformly improving a method $U \to S \to R$ by $U \to U \to R'$, with the requirement that $R' \supseteq R$, is not usually a question of simply adjusting the tuning parameter $\lambda$ in such a way that $S$ becomes $U$. The MABH procedure (bottom-right), resulting from the choice $\lambda=1$ in the conditional selective method (top-left), will be a more powerful method in many situations, but is not a uniform improvement of the original method unless $\lambda \leq 1/2$. Generally, finding a true uniform improvement, in the sense that $R' \supseteq R$ surely for all $\mathrm{P} \in M$, involves much more work than merely adjusting a tuning parameter.

Comparing the conditional selective procedure and its two improvements, we see that the conditional selective procedure is exactly the intersection of its conditional and its selective improvement: it rejects either of $H_1, H_2$ if and only if both the selective and the conditional improvements do. Compared to the conditional selective procedure, the selective improvement may have additional rejections if $S=\{1,2\}$, while the conditional improvement cannot. On the other hand, the conditional improvement may have more rejections if $S=\emptyset$, while the selective procedure remains powerless there. If $S=\{1\}$ or $S=\{2\}$, both procedures may have additional rejections compared to the conditional selective procedure. However, the selective procedure has more chance of rejecting the hypothesis in $S$, while the conditional procedure may additionally reject a hypothesis outside $S$. The two improvements are, in this sense, disjoint.

The two improvements in Figure \ref{fig example1-fdr} are easy to generalize to the case of more than two null hypotheses. They illustrate an important general principle about selection and conditioning in multiple testing. This principle says that selection and conditioning each pull a procedure in opposite directions. Conditioning forces a procedure to distribute its power evenly over the outcome space, since the procedure must have proper error control on all realizations of $S$, conditional on $S$. Selection, on the other hand, focuses the power of procedures away from hypotheses in $U\setminus S$, since it restricts rejections to $S$. A procedure that is both selective and conditional must therefore necessarily focus power both away from $S$ and away from $U \setminus S$. Since there is nowhere for the power to go, it vanishes. The conditional selective procedure at top left, being the intersection of a conditional and a selective procedure, is therefore sub-optimal as either. It is definitely sub-optimal from the holistic perspective.

\section{(In)admissibility conditions}

Having looked in detail at a small example, we will now come back to the general case. We will give some sufficient conditions under which uniform improvements exist.

We call a conditional selective inference procedure $U \to S \to R$ \emph{inadmissible} if $U \to R'$ exists that uniformly improves upon $U \to S \to R$ in the sense that $R \subseteq R'$, surely for all $\mathrm{P} \in M$, and $\mathrm{P}(R \subset R') > 0$ for at least one $\mathrm{P} \in M$, while still controlling the error rate, i.e., $\mathrm{E}_\mathrm{P}[e_\mathrm{P}(R')] \leq \alpha$. We will be a bit more precise and call $U \to S \to R$ \emph{inadmissable as a selective method on $S$} if the uniform improvement still satisfies $R' \subseteq S$, surely. Similarly, we call $U \to S \to R$ \emph{inadmissable as a conditional method on $S$} if the uniform improvement still controls its error rate conditional on $S$. Remember, however, that from the holistic perspective we do not care too much about $S$ or about these sub-classes of inadmissibility. 

Our definition of a uniform improvement is very strict \citep[as in][]{goeman2021}, requiring that $R' \subseteq R$ for every outcome $\omega \in \Omega$. A uniform improvement, therefore, can never fail to reject a hypothesis that the method it improves does reject. This requirement makes admissibility a very low bar to achieve. For example, a FWER-controlling method that rejects all hypotheses in $U$ with probability $\alpha$, independently of the data, and rejects nothing with probability $1-\alpha$, is admissible according to our definition. Since admissibility is so easy to achieve, inadmissibility is particularly bad news.

We will give several sufficient conditions for inadmissibility of conditional selective methods. Propositions \ref{thm inadmissible TP}, \ref{thm inadmissible empty} and \ref{thm extra sup} apply to any error rate. Proposition \ref{thm inadmissible FWER} is only for FWER control.

\begin{proposition} \label{thm inadmissible TP}
If $\delta>0$ is known such that $\mathrm{P}[S \cap T_\mathrm{P} = \emptyset] \geq \delta$ for all $\mathrm{P} \in M$, then $U \to S \to R$ is inadmissible as a selective procedure on $S$, unless $R=S$ surely for all $\mathrm{P} \in M$.
\end{proposition}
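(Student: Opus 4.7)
The plan is to exploit the slack in the unconditional error bound created by the event $\{S \cap T_\mathrm{P} = \emptyset\}$, which has probability at least $\delta$ under every $\mathrm{P} \in M$, and then to spend that slack by randomly enlarging $R$ all the way to $S$ with some small probability $\gamma$.

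First I would establish the slack identity $\mathrm{E}_\mathrm{P}[e_\mathrm{P}(R)] \leq \alpha(1-\delta)$. Since $R \subseteq S$ surely, the stated property ``$e_\mathrm{P}(R)=0$ whenever $R\cap T_\mathrm{P}=\emptyset$'' gives $e_\mathrm{P}(R)=0$ on $\{S\cap T_\mathrm{P}=\emptyset\}$. The indicator $1_{S\cap T_\mathrm{P}\neq\emptyset}$ is $S$-measurable because $T_\mathrm{P}$ is fixed, so conditioning on $S$ and using the assumed conditional error bound yields
\begin{align*}
\mathrm{E}_\mathrm{P}[e_\mathrm{P}(R)] = \mathrm{E}_\mathrm{P}\bigl[\mathrm{E}_\mathrm{P}[e_\mathrm{P}(R)\mid S]\, 1_{S\cap T_\mathrm{P}\neq\emptyset}\bigr] \leq \alpha(1-\delta).
\end{align*}
The same computation with $R$ replaced by $S$, together with $e_\mathrm{P}(S)\leq 1_{S\cap T_\mathrm{P}\neq\emptyset}$, also gives $\mathrm{E}_\mathrm{P}[e_\mathrm{P}(S)]\leq 1-\delta$.

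Next I would introduce an external randomizer $V\sim\mathrm{Uniform}(0,1)$ independent of $X$, and set $R'=S$ if $V\leq\gamma$ and $R'=R$ otherwise. Clearly $R\subseteq R'\subseteq S$ surely, so $R'$ is still selective on $S$, and by independence of $V$,
\begin{align*}
\mathrm{E}_\mathrm{P}[e_\mathrm{P}(R')] = (1-\gamma)\,\mathrm{E}_\mathrm{P}[e_\mathrm{P}(R)] + \gamma\,\mathrm{E}_\mathrm{P}[e_\mathrm{P}(S)] \leq (1-\delta)\bigl[\alpha + \gamma(1-\alpha)\bigr].
\end{align*}
Choosing $\gamma = \min\{1,\,\alpha\delta/[(1-\alpha)(1-\delta)]\}$ (interpreting $\gamma=1$ if $\delta=1$) makes this at most $\alpha$. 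For strict improvement, the excluded case in the proposition is precisely $R=S$ surely for every $\mathrm{P}$; outside this case some $\mathrm{P}\in M$ has $\mathrm{P}(R\neq S)>0$, and then by independence $\mathrm{P}(R\subsetneq R')\geq \gamma\,\mathrm{P}(R\neq S)>0$.

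The main difficulty is spotting the slack identity, which crucially combines selectivity ($R\subseteq S$), conditional error control, and the $\delta$-assumption; without any one of these three ingredients the argument breaks. Everything afterwards, including the choice of $\gamma$ and the external randomization, is mechanical.
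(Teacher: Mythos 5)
Your proposal is correct and follows essentially the same route as the paper: both exploit that $e_\mathrm{P}(R)=e_\mathrm{P}(S)=0$ on the event $\{S\cap T_\mathrm{P}=\emptyset\}$ of probability at least $\delta$ to get $\mathrm{E}_\mathrm{P}[e_\mathrm{P}(R)]\leq\alpha(1-\delta)$ and $\mathrm{E}_\mathrm{P}[e_\mathrm{P}(S)]\leq 1-\delta$, and then randomize between $R$ and $S$; your $\gamma$ is exactly $1-q$ for the paper's $q=(1-\alpha-\delta)/[(1-\alpha)(1-\delta)]$. The only cosmetic difference is that you make the conditioning argument and the external randomizer explicit.
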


\begin{proof}
Let $q = (1-\alpha - \delta) / (1-\alpha)(1- \delta)$ if $\delta < 1 - \alpha$ and $q=0$ otherwise. Let $R' = R$ with probability $q$, and $R' = S$ otherwise. Then, since $e_\mathrm{P}(S) = e_{\mathrm{P}}(R) = 0$ if $S \cap T_{\mathrm{P}} = \emptyset$, we have $\mathrm{E}_{\mathrm{P}}[e_\mathrm{P}(R)] \leq \alpha \mathrm{P}[S \cap T_\mathrm{P} \neq \emptyset] \leq \alpha(1-\delta)$, and $\mathrm{E}_{\mathrm{P}}[e_\mathrm{P}(R)] \leq \mathrm{P}[S \cap T_\mathrm{P} \neq \emptyset] \leq (1-\delta)$. Therefore, 
\[
\mathrm{E}_\mathrm{P}[e_\mathrm{P}(R')] = q\mathrm{E}_\mathrm{P}[e_\mathrm{P}(R)] + (1-q)\mathrm{E}_\mathrm{P}[e_\mathrm{P}(S)] \leq q(1-\delta)\alpha + (1-q)(1-\delta) = \alpha.
\] 
It follows that $R'$ controls the unconditional error rate.

Noting that $q <1$, we have $\mathrm{P}(R \subset R') = (1-q) \mathrm{P}(R \subset S) > 0$ for at least one $\mathrm{P} \in M$ unless $R=S$ surely for all $\mathrm{P} \in M$, so $R'$ uniformly improves over $R$.

Finally, we have trivially that $R \subseteq S$ surely for all $\mathrm{P} \in M$.

It follows that $U \to S \to R$ is inadmissible as a selective procedure on $S$.
\end{proof}

In words, Proposition \ref{thm inadmissible TP} says that any conditional selective procedure is inadmissible if, with positive probability, the selection step results in a set $S$ without true hypotheses \citep[for examples, see][]{ellis2020, al2020adaptive, heller2023simultaneous}. In this case, it is impossible to make false discoveries, and the $\alpha$ for such $S$ can be better spent elsewhere. The condition of the proposition implies that $S$ has FWER control at level $\delta$, but allows $\delta > \alpha$. The proposition does not apply when $R=S$ surely, but we come back to that case in Observation \ref{thm R=S} in Section \ref{sec FCR testing}.

\begin{proposition} \label{thm inadmissible empty}
If $\mathrm{P}(S=\emptyset)>0$ for some $\mathrm{P} \in M$, then $U \to S \to R$ is inadmissible as a conditional procedure on $S$. It is inadmissible as a selective procedure on any $S'$ for which $S' \supseteq S$ surely for all $\mathrm{P} \in M$, and $S' \neq \emptyset$ surely for all $\mathrm{P} \in M$. 
\end{proposition}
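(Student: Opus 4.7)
The proof will hinge on the observation that when $S=\emptyset$ the original procedure is forced to take $R=\emptyset$ (since $R\subseteq S$ surely), contributing zero to the error rate and leaving the full $\alpha$-budget unspent on the event $\{S=\emptyset\}$. The plan is to construct $R'$ that agrees with $R$ on $\{S\neq\emptyset\}$ and, via an independent randomization, rejects a single additional hypothesis on $\{S=\emptyset\}$. Concretely, I introduce a Bernoulli random variable $V$ with $\mathrm{P}(V=1)=\alpha$, independent of the data $X$ (and hence of $S$).

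For the conditional-on-$S$ statement, I fix any $H^{\star}\in U$ (which is possible since $U\neq\emptyset$) and set $R'=R$ when $S\neq\emptyset$, $R'=\{H^{\star}\}$ when $S=\emptyset$ and $V=1$, and $R'=\emptyset$ when $S=\emptyset$ and $V=0$. Conditional on $\{S\neq\emptyset\}$, the bound $\mathrm{E}_\mathrm{P}[e_\mathrm{P}(R')\mid S]\le\alpha$ is inherited verbatim from the original procedure. Conditional on $\{S=\emptyset\}$, independence of $V$ and $0\le e_\mathrm{P}\le 1$ with $e_\mathrm{P}(\emptyset)=0$ give $\mathrm{E}_\mathrm{P}[e_\mathrm{P}(R')\mid S=\emptyset]=\alpha\,e_\mathrm{P}(\{H^{\star}\})\le\alpha$. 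Containment $R'\supseteq R$ is immediate (equality when $S\neq\emptyset$, and $R=\emptyset$ otherwise), and for the $\mathrm{P}$ in the hypothesis we have $\mathrm{P}(R\subset R')\ge\mathrm{P}(S=\emptyset,V=1)=\alpha\,\mathrm{P}(S=\emptyset)>0$, giving strict improvement.

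For the selective-on-$S'$ statement, I would modify the construction so that $H^{\star}$ is a measurable selector from $S'$ rather than a fixed element of $U$; this is well-defined because $S'\neq\emptyset$ surely. Then $R'\subseteq S'$ surely: on $\{S\neq\emptyset\}$ we have $R'=R\subseteq S\subseteq S'$, and on $\{S=\emptyset\}$ we have $R'\subseteq\{H^{\star}\}\subseteq S'$. Unconditional error control follows by the tower property, since each conditional expectation is bounded by $\alpha$ as computed above. The only real obstacle is producing the measurable selector $H^{\star}\in S'$ on $\{S=\emptyset\}$, which I would handle by fixing an enumeration of $U$ and picking the element of $S'$ with smallest index; one should also tacitly assume $\alpha>0$ (otherwise the whole notion of strict improvement is vacuous).
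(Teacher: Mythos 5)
Your proposal is correct and follows essentially the same route as the paper: on the event $\{S=\emptyset\}$, where no error can be made, spend the unused $\alpha$ via an independent randomization to reject something extra, and check conditional control case by case. The only cosmetic difference is that the paper rejects all of $S'$ with ancillary probability $\alpha$ rather than a single selected element, which yields a (slightly larger) improvement and sidesteps your measurable-selector concern entirely.
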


\begin{proof}
Choose any $S'$ that fulfils the assumptions, noting that $S' = U$ always fits. Let $R' = R$ if $S \neq \emptyset$, and if $S = \emptyset$, let $R' = S'$ with ancillary probability $\alpha$ and $R'=\emptyset$ otherwise. Then by assumption there exists $\mathrm{P} \in M$ such that $\mathrm{P}(R \subset R') =  \alpha \mathrm{P}(S = \emptyset) > 0$. By conditional error control, we have $\mathrm{E}_\mathrm{P}[e_\mathrm{P}(R') \mid S] = \mathrm{E}_\mathrm{P}[e_\mathrm{P}(R) \mid S] 
\leq \alpha$ if $S \neq \emptyset$. If $S = \emptyset$, we have $\mathrm{E}_\mathrm{P}[e_\mathrm{P}(R') \mid S] \leq \alpha$ by construction, since $e_\mathrm{P}(\emptyset) = 0$. This proves inadmissibility as a conditional method on $S$. Noting that conditional control on $S$ implies unconditional control, and that $R' \subset S'$ surely, we have inadmissibility as a selective method on $S'$.
\end{proof}

Proposition \ref{thm inadmissible empty} says that a conditional selective procedure may be improved if it sometimes selects $S = \emptyset$. There is a subtle but important difference with Proposition \ref{thm inadmissible TP}: if $\mathrm{P}(S=\emptyset)>0$ for all $\mathrm{P} \in M$, then we would fulfil the conditions for Proposition \ref{thm inadmissible TP}, but Proposition \ref{thm inadmissible empty} only requires that this happens for at least one $\mathrm{P} \in M$. Intuitively, if $S=\emptyset$ sometimes, we can make no errors in that case, and we can spend the $\alpha$ allocated to that case elsewhere.

\begin{proposition} \label{thm extra sup}
If $\alpha'$ is known such that
\begin{equation} \label{eq extra sup}
\alpha' = \sup_{\mathrm{P} \in M} \mathrm{E}_\mathrm{P}[e_\mathrm{P}(R)] < \sup_{\mathrm{P} \in M} \mathrm{E}_\mathrm{P}\Big[\sup_{\mathrm{P} \in M} \mathrm{E}_\mathrm{P}[e_\mathrm{P}(R)\mid S)]\Big],
\end{equation}
and $\mathrm{P}(R=S)<1$ for at least one $\mathrm{P} \in M$, then $U \to S \to R$ is inadmissible as a selective method. 
\end{proposition}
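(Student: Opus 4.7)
The plan is to produce $R'$ by layering a small, independent ancillary randomization on top of $R$. Specifically, introduce $V \sim \mathcal{U}(0,1)$ independent of the data and set $R' = S$ if $V \leq \eta$ and $R' = R$ otherwise, for a constant $\eta > 0$ to be chosen. Since $R \subseteq S$ surely, this automatically gives $R \subseteq R' \subseteq S$, so the candidate is selective on $S$ and pointwise at least as large as $R$. All that remains is to choose $\eta$ so that the error rate is still controlled and the enlargement is strict with positive probability for some $\mathrm{P} \in M$.

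The first substantive step is to read a strict gap $\alpha' < \alpha$ out of (\ref{eq extra sup}). Because $U \to S \to R$ is conditional selective at level $\alpha$, the inner supremum on the right-hand side of (\ref{eq extra sup}) is bounded by $\alpha$ almost surely, so the whole right-hand side is at most $\alpha$; combined with the strict inequality this gives $\alpha' < \alpha$. Independence of $V$ and the data, together with $0 \leq e_\mathrm{P}(S) \leq 1$ and the definition $\alpha' = \sup_{\mathrm{P}} \mathrm{E}_\mathrm{P}[e_\mathrm{P}(R)]$, then yield
\[
\mathrm{E}_\mathrm{P}[e_\mathrm{P}(R')] = \eta\,\mathrm{E}_\mathrm{P}[e_\mathrm{P}(S)] + (1-\eta)\,\mathrm{E}_\mathrm{P}[e_\mathrm{P}(R)] \leq \eta + (1-\eta)\alpha',
\]
so taking $\eta = (\alpha - \alpha')/(1 - \alpha') \in (0,1]$ makes the right-hand side equal to $\alpha$, giving unconditional control at level $\alpha$. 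Strict improvement follows from $\mathrm{P}(R \subsetneq R') = \eta\,\mathrm{P}(R \subsetneq S)$, which is positive for any $\mathrm{P}$ witnessing $\mathrm{P}(R=S)<1$.

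There is no real obstacle here; what is perhaps worth emphasising is how sparingly (\ref{eq extra sup}) is used. The whole weight is carried by the corollary $\alpha' < \alpha$. One could instead take $\eta = (\mathrm{RHS} - \alpha')/(1 - \alpha')$, where $\mathrm{RHS}$ denotes the right-hand side of (\ref{eq extra sup}), to obtain the sharper bound $\mathrm{E}_\mathrm{P}[e_\mathrm{P}(R')] \leq \mathrm{RHS}$, thereby exploiting the condition more directly. The side assumption $\mathrm{P}(R=S)<1$ is manifestly necessary, since if $R = S$ surely then $R \subseteq R' \subseteq S$ forces $R' = R$ and no selective improvement is possible.
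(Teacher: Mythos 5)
Your proof is correct and follows essentially the same route as the paper: both deduce $\alpha'<\alpha$ from conditional control plus the strict inequality in (\ref{eq extra sup}), then randomize between $R$ and $S$ with mixing weight $(\alpha-\alpha')/(1-\alpha')$ to obtain a selective uniform improvement. Your closing remark that the hypothesis is used only through the corollary $\alpha'<\alpha$ accurately reflects how the paper's own argument uses it as well.
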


\begin{proof}
Note that $\sup_{\mathrm{P} \in M} \mathrm{E}_\mathrm{P}[e_\mathrm{P}(R)\mid S)] \leq \alpha$ for all $S$, since $U \to S \to R$ controls its error rate conditionally. Therefore, the right-hand side of (\ref{eq extra sup}) is at most $\alpha$. Let $\alpha' = \sup_{\mathrm{P} \in M} \mathrm{E}_\mathrm{P}[e_\mathrm{P}(R)]$, so $\alpha' < \alpha$. Let $R' = R$ with probability $1-q$, and $R=S$ with probability $q$, where $q = (\alpha - \alpha')/(1-\alpha')$. Then we have
\[
\sup_{\mathrm{P} \in M} \mathrm{E}_\mathrm{P}[e_\mathrm{P}(R')] = (1-\alpha+\alpha') \sup_{\mathrm{P} \in M} \mathrm{E}_\mathrm{P}[e_\mathrm{P}(R)] + (\alpha-\alpha') \sup_{\mathrm{P} \in M} \mathrm{E}_\mathrm{P}[e_\mathrm{P}(S)] \leq (1-q) \alpha' + q = \alpha,
\]
so $U \to R'$ controls the error rate (unconditionally). Since $R' \subseteq S$ surely, $U \to R'$ is selective on $S$. By the condition of the proposition, there is a $\mathrm{P} \in M$ such that $\mathrm{P}(R' \supset R) = q \mathrm{P}(R \neq S) > 0$. It follows that $U \to R'$ uniformly improves upon $U \to S \to R$ as a selective method, so $U \to S \to R$ is inadmissible.
\end{proof}

To understand Proposition \ref{thm extra sup}, note that the left-hand side of (\ref{eq extra sup}) is equal to 
\[
\sup_{\mathrm{P} \in M} \mathrm{E}_\mathrm{P}\big[ \mathrm{E}_\mathrm{P}[e_\mathrm{P}(R)\mid S)]\big],
\]
so that (\ref{eq extra sup}) holds with $\leq$ by definition. Unconditional control bounds the left-hand side of (\ref{eq extra sup}) by $\alpha$, while conditional control implies that the right-hand side of (\ref{eq extra sup}) is bounded by $\alpha$. Any gap between the two can be exploited by an unconditional test to gain power. Such a gap may arise if the `worst case' $\mathrm{P}$, for which the conditional $\alpha$-level is exhausted, depends on $S$. We give an example in  Appendix \ref{sec extra sup}. 

\begin{proposition} \label{thm inadmissible FWER}
If $U \to S \to R$ controls FWER conditional on $S$, and there exists $\mathrm{P} \in M$ such that $\mathrm{P}(R = S \mid S) > 0$ for some $S \subset U$, then $U \to S \to R$ is inadmissible as a conditional procedure on $S$.
\end{proposition}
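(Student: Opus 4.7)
The plan is to augment $R$ with a single extra rejection drawn from $U\setminus S$ on a carefully chosen rare event, using external randomization to keep the conditional FWER intact. By the hypothesis of the proposition there exist $\mathrm{P}^\ast \in M$ and a realization $S^\ast \subsetneq U$ with $\mathrm{P}^\ast(S = S^\ast) > 0$ and $\mathrm{P}^\ast(R = S^\ast \mid S = S^\ast) > 0$. Because $S^\ast \subsetneq U$, I can pick any $H^\ast \in U \setminus S^\ast$. I would then introduce a $\mathrm{Bernoulli}(\alpha)$ variable $B$ independent of the data, and define $R' = R \cup \{H^\ast\}$ on the event $\{S = S^\ast,\ R = S^\ast,\ B = 1\}$ and $R' = R$ otherwise. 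Then $R \subseteq R'$ surely, and $\mathrm{P}^\ast(R \subsetneq R') = \alpha\,\mathrm{P}^\ast(R = S^\ast,\ S = S^\ast) > 0$, which is the uniform improvement required.

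To verify that $R'$ still controls FWER conditional on $S$, fix an arbitrary $\mathrm{Q} \in M$. On $\{S \neq S^\ast\}$ we have $R' = R$ and nothing needs to be proved. On $\{S = S^\ast\}$, since $e_\mathrm{Q}$ is an indicator, the increment $e_\mathrm{Q}(R') - e_\mathrm{Q}(R)$ can only be positive on the augmentation event $\{R = S^\ast,\ B = 1\}$ and only when $H^\ast \in T_\mathrm{Q}$. If in addition $S^\ast \cap T_\mathrm{Q} \neq \emptyset$ then $e_\mathrm{Q}(R) = 1$ already holds on $\{R = S^\ast\}$, so the increment is zero; the only remaining case is $S^\ast \cap T_\mathrm{Q} = \emptyset$ and $H^\ast \in T_\mathrm{Q}$, and in that case $R \subseteq S^\ast$ forces $e_\mathrm{Q}(R) \equiv 0$ conditional on $S = S^\ast$, so the conditional expectation of $e_\mathrm{Q}(R')$ equals $\alpha\,\mathrm{Q}(R = S^\ast \mid S = S^\ast) \leq \alpha$.

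The delicate point I expect to have to argue carefully is the mutual exclusivity of the two possible sources of error in $R'$: the original error of $R$, and the new error from inserting $H^\ast$. These cannot both be positive, because the only configuration in which adding $H^\ast$ produces a false rejection ($H^\ast \in T_\mathrm{Q}$ with $S^\ast \cap T_\mathrm{Q} = \emptyset$) is precisely one in which $R \subseteq S^\ast$ contains no true hypothesis, so the conditional FWER of $R$ on that event is identically zero and the full $\alpha$ budget is available to pay for $H^\ast$. This cancellation is FWER-specific, since it relies on the error rate saturating at $1$ as soon as any false rejection occurs; this is also why the analogous statement is not available for FDR-type error rates.
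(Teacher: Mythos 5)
Your proof is correct and is essentially the paper's own argument: augment $R$ by an external $\mathrm{Bernoulli}(\alpha)$ coin on the event $R=S$, and verify conditional FWER by the dichotomy on whether $S\cap T_{\mathrm{Q}}=\emptyset$ (in which case $R$ makes no errors and the full $\alpha$ pays for the addition) or not (in which case $R=S$ already triggers the FWER event, so the addition is free). The only cosmetic difference is that the paper sets $R'=U$ on every realization with $R=S$ rather than adding a single $H^{\ast}$ on one fixed realization $S^{\ast}$; the paper's choice makes the improvement slightly larger but the verification is identical.
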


\begin{proof}
With probability $1-\alpha$, let $R' = R$, and with probability $\alpha$, let $R'= U$ if $R = S$, and $R' = R$ otherwise.
We will prove that $R'$ controls FWER conditional on $S$ for every $\mathrm{P} \in M$. We have either $T_{\mathrm{P}} \cap S = \emptyset$ or $T_{\mathrm{P}} \cap S \neq \emptyset$. In the former case,  $\mathrm{P}(R' \cap T_{\mathrm{P}} \neq \emptyset \mid S) \leq \mathrm{P}(R'\supseteq R) \leq \alpha$, since it is not possible to make a Type I error with $R \subseteq S$. In the latter case, $\mathrm{P}(R' \cap T_{\mathrm{P}} \neq \emptyset \mid S) = \mathrm{P}(R \cap T_{\mathrm{P}} \neq \emptyset \mid S) \leq \alpha$, since $R$ controls FWER conditional on $S$. It follows that $R'$ controls FWER conditional on $S$ for every $\mathrm{P} \in M$. 
According to the assumption, there exists  $\mathrm{P} \in M$ such that $\mathrm{P}(R \subset R') = \mathrm{P}(R = S \subset U) > 0$. It follows that $R$ is inadmissible as a conditional procedure on $S$.
\end{proof}

Proposition \ref{thm inadmissible FWER} exploits the Sequential Rejection Principle \citep{goeman2010}, which says that if we reject all hypotheses under consideration, we may recycle the $\alpha$ and continue testing with a new batch. For a conditional selective procedure, this means that if we have exhausted all hypotheses in $S$, we may continue testing hypotheses in $U \setminus S$.

In the toy example, we see that the conditions of Propositions \ref{thm inadmissible TP}, \ref{thm inadmissible empty} and \ref{thm inadmissible FWER}
are all fulfilled, provided that $\lambda < 1$. The probability that we select only false null hypotheses is $(1-\lambda)^2, 1-\lambda$, or 1 respectively in the situation that 2, 1 or 0 hypotheses are true, so the condition of Proposition 1 is fulfilled with $\delta = (1-\lambda)^2$. Under $\mathrm{P} \in H_1 \cap H_2$ we have $\mathrm{P}(S = \emptyset) = (1-\lambda)^2 > 0$, so also the condition of Proposition \ref{thm inadmissible empty} is fulfilled. Finally, if FWER was controlled, take $S = \emptyset$; then all hypotheses in $S$ are rejected with positive probability for every $\mathrm{P} \in M$, conditional on $S=\emptyset$. It may seem from this checking of the conditions that the crucial characteristic that makes the procedure in the toy example inadmissible is the fact that it selects $S=\emptyset$ with positive probability. However, this is not the only driving factor. For example, perhaps the most important improvement of the top-left over the top-right procedure in Figure \ref{fig example1-fdr} is the increase of the critical value from $2\lambda\alpha$ to $2\alpha$ for rejecting the second hypothesis after rejecting the first. This change is not tied to the selection of $S=\emptyset$ in any way. The propositions of this section are sufficient conditions for inadmissibility, but they are by no means necessary. We will see examples of improvements of procedures that never select $S=\emptyset$ in Sections \ref{sec winner} and \ref{sec data splitting general}.

The propositions in this section should be seen as examples of classes of procedures that might be improved by letting go of selection and conditioning. The emphasis was on uniform improvements. Often, procedures may be constructed that do not necessarily uniformly improve upon the original, but are substantially more powerful for relevant alternatives. An example is the standard MABH in the toy example, which, although not a uniform improvement over the original, has much larger rejection regions for both $H_1$ and $H_2$.

\section{Second example: conditioning on the winner} \label{sec winner}

The toy example that we considered thus far may have seemed to hinge much on the property that it selected $S=\emptyset$ with positive probability. Here, we look at a situation in which $\mathrm{P}(S=\emptyset)=0$ for all $\mathrm{P} \in M$.

The hypotheses that attract most attention in publications are generally those with smallest $p$-values. It is of interest, therefore, to consider selection rules based on ranks. Selective inference for such selections, ``inference on winners'', has been considered by \citet{zhong2008, reid2017, fuentes2018, zrnic2020, andrews2022, zrnic2022locally}. We consider the simplest set-up here, where we select only a single ``winner''. In this set-up, we consider the question whether the winner is truly non-null.

Let $P_1, \ldots, P_n$ be independent $p$-values, standard uniform under their respective null hypotheses $H_1, \ldots, H_n$, so that $U = \{1,\ldots, n\}$. We consider the selection rule that selects the single hypothesis for which the $p$-value is smallest, with ties broken arbitrarily, so that $|S|=1$ always. 

If we want to condition on the selection event $S=\{i\}$, we cannot simply reject for small values of $P_i$, adjusting the critical value for the selection event as we did in the toy example of Figure \ref{fig example1-1}. To see why this would be problematic, consider a set-up with $n=2$ in which $H_1$ is null, but $H_2$ is not. Then 
\begin{equation} \label{eq conditional impossible}
\mathrm{P}(P_1 \leq t \mid S=\{1\}) =
\frac{\mathrm{P}(P_1 \leq t, P_1 \leq P_2)}{\mathrm{P}(P_1 \leq P_2)} =
\frac{\mathrm{P}(P_1 \leq P_2 \wedge t)}{\mathrm{P}(P_1 \leq P_2)} =
\frac{\mathrm{E}_\mathrm{P}(P_2 \wedge t)}{\mathrm{E}_\mathrm{P}(P_2)}.
\end{equation}
Since $P_2$ is under the alternative, its distribution is arbitrary, so it could be uniform on $[0,t]$. In that case, (\ref{eq conditional impossible}) evaluates to 1. Therefore, for every $t>0$, there exists a $\mathrm{P} \in M$ such that $\mathrm{P}(P_1 \leq t \mid S=\{1\}) = 1$. Therefore, it is impossible to bound (\ref{eq conditional impossible}), in supremum over $\mathrm{P} \in M$, by $\alpha$. Consequently, it impossible to construct a conditional selective procedure that rejects for small values of $P_i$.

A way out of this conundrum was offered by \cite{reid2017}, who proposed to use as an alternative test statistic $P_{i|S=\{i\}} = P_i/\min_{j \neq i} P_j$. Conditional on $S=\{i\}$, we have that $P_i/\min_{j \neq i} P_j$ is standard uniform for all $\mathrm{P} \in H_i$, as Lemma \ref{lem p ratio i} states. Based on this lemma we can construct a conditional selective inference procedure. It rejects $H_i$, $i \in S$, when $P_i/\min_{j \neq i} P_j \leq \alpha$. We call this Procedure A.

\begin{lemma} \label{lem p ratio i}
If $n\geq 2$, and $\mathrm{P} \in H_i$, then $P_i/\min_{j \neq i}P_j \sim \mathcal{U}(0,1)$ given $S=\{i\}$.
\end{lemma}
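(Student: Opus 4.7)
The plan is to compute the conditional CDF of the ratio directly, exploiting the independence structure under $H_i$. Let $M = \min_{j\neq i} P_j$. Since ties occur with probability zero among continuous independent $p$-values, the event $\{S = \{i\}\}$ coincides (almost surely) with $\{P_i \leq M\}$ (or $\{P_i < M\}$, equivalently). Under $\mathrm{P} \in H_i$, $P_i \sim \mathcal{U}(0,1)$ is independent of $M$, while the distribution of $M$ can be arbitrary since the other $p$-values need not be null.

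First, I would fix $t \in [0,1]$ and write
\[
\mathrm{P}\bigl(P_i/M \leq t \,\big|\, S = \{i\}\bigr) = \frac{\mathrm{P}(P_i \leq tM,\ P_i \leq M)}{\mathrm{P}(P_i \leq M)}.
\]
Since $t \leq 1$, the event $\{P_i \leq tM\}$ is contained in $\{P_i \leq M\}$, so the numerator reduces to $\mathrm{P}(P_i \leq tM)$.

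Next, I would condition on $M$ and use that $P_i \sim \mathcal{U}(0,1)$ is independent of $M$, with $tM \in [0,1]$ surely. This gives $\mathrm{P}(P_i \leq tM \mid M) = tM$ and $\mathrm{P}(P_i \leq M \mid M) = M$. Taking expectations yields numerator $t\,\mathrm{E}_\mathrm{P}(M)$ and denominator $\mathrm{E}_\mathrm{P}(M)$, so the ratio equals $t$, provided $\mathrm{E}_\mathrm{P}(M) > 0$, which holds as long as $M$ is not almost surely zero (true whenever $n \geq 2$ and at least one $P_j$, $j\neq i$, has a non-degenerate distribution; the degenerate case is trivial as $S=\{i\}$ then has probability zero or the statement is vacuous).

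The main (minor) obstacle is simply handling the edge cases cleanly: ties, and the possibility that $\mathrm{E}_\mathrm{P}(M) = 0$. Both can be dispatched in a single sentence, so there is no serious difficulty; the result is essentially a one-line consequence of independence and the bound $tM \leq M$.
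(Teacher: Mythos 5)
Your proof is correct and is essentially the paper's argument: both rest on the independence of $P_i$ and $M=\min_{j\neq i}P_j$ together with the uniformity of $P_i$ under $H_i$, the paper conditioning on $M=q$ to get the conditional probability $t$ pointwise and then integrating, while you integrate first to get $t\,\mathrm{E}_\mathrm{P}(M)/\mathrm{E}_\mathrm{P}(M)$. The edge cases you flag (ties, $\mathrm{E}_\mathrm{P}(M)=0$) are handled correctly and are indeed immaterial.
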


\begin{proof}
Choose any $\mathrm{P} \in H_i$. We have
\[
\mathrm{P}\Big(\frac{P_{i}}{\min_{j \neq i}P_j} \leq t \,\Big|\, S=\{i\}, \min_{j \neq i}P_j=q\Big) 
= \mathrm{P}\big(P_{i} \leq qt \mid P_i \leq q, \min_{j \neq i} P_{j} = q\big) = t,
\]
where we use that $\min_{j \neq i} P_{j}$ and $P_i$ are independent. Taking expectations conditional on $S=\{i\}$ on both sides, the result follows.
\end{proof}

What error rate does this conditional procedure on $S$ control? On a family $S$ of only one hypothesis, unadjusted testing, FCR, FWER and FDR control are all identical; Procedure A, therefore, controls all these error rates simultaneously. To construct potential improvements of the method, we must, therefore, decide which error rate to retain control of. We choose FDR for this example.

As in Section \ref{sec toy} we will construct three alternative procedures. The first, Procedure B, retains validity conditional on $S$, but possibly rejects hypotheses outside $S$. The second, Procedure C, will have unconditional FWER control, but still only rejects hypotheses within $S$. The third procedure, Procedure D, will be fully unconditional and defined on $U$.

To construct procedure $B$, we must extend the notion of conditional $p$-values for $H_j$, $j \notin S$. We need the following lemma.

\begin{lemma} \label{lem p ratio j}
If $n\geq 2$, and $\mathrm{P} \in H_j$, $j \neq i$, then $(P_j-P_i)/(1-P_i) \sim \mathcal{U}(0,1)$, independent of $(P_k)_{k \neq j}$, given $S=\{i\}$. 
\end{lemma}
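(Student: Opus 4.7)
The plan is to proceed by conditioning further on $(P_k)_{k \neq j}$ and exploiting the independence structure. Under $\mathrm{P} \in H_j$, the coordinate $P_j$ is $\mathcal{U}(0,1)$ and independent of $(P_k)_{k \neq j}$. The crucial observation is that the selection event $S = \{i\}$ (the set of outcomes where $P_i = \min_k P_k$ after tie-breaking) couples $P_j$ to the remaining coordinates only through the single constraint $P_j \geq P_i$: once we fix $(P_k)_{k \neq j} = (p_k)_{k \neq j}$, the requirement that $i$ still wins among the indices $\{1,\dots,n\}\setminus\{j\}$ is determined by the $(p_k)_{k \neq j}$ alone, and the extra requirement that $j$ does not beat $i$ reduces to $P_j \geq p_i$.

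First I would write $\mathrm{P}(\,\cdot\mid S=\{i\})$ as a mixture over the conditional distribution of $(P_k)_{k\neq j}$ given $S=\{i\}$, and then further condition on $(P_k)_{k\neq j}=(p_k)_{k\neq j}$. On this atom, independence of $P_j$ and $(P_k)_{k \neq j}$ under the unconditional $\mathrm{P}$, together with the reduction of the selection event to $\{P_j \geq p_i\}$, gives that $P_j\mid (P_k)_{k\neq j}=(p_k)_{k\neq j},\,S=\{i\}$ is $\mathcal{U}(p_i,1)$. Applying the affine transformation $t\mapsto (t-p_i)/(1-p_i)$ shows that $(P_j - p_i)/(1-p_i)$ is $\mathcal{U}(0,1)$ on this atom, and, crucially, its conditional law does not depend on $(p_k)_{k\neq j}$. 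Integrating out $(P_k)_{k\neq j}$ against its conditional law given $S=\{i\}$ yields both the uniform marginal and the conditional independence of $(P_j-P_i)/(1-P_i)$ from $(P_k)_{k\neq j}$ claimed in the lemma. The argument mirrors exactly the one used for Lemma \ref{lem p ratio i}, with the roles of $i$ and $j$ essentially interchanged.

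The only step that demands care is the handling of the tie-breaking rule in the definition of $S$; once it is agreed (as implicit in the preceding argument) that tie-breaking depends on an ancillary randomization or on a rule that produces a set $\{i\}$ of probability zero on ties, the event $\{S=\{i\}\}$ factors into a function of $(P_k)_{k\neq j}$ and the one-sided constraint on $P_j$ as described. Everything else is a direct calculation with independent uniforms, so I do not anticipate a genuine obstacle—the whole content is the reduction of $S=\{i\}$ to the inequality $P_j\geq P_i$ once the other coordinates are fixed.
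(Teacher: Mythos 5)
Your proposal is correct and follows essentially the same route as the paper's proof: condition additionally on $(P_k)_{k\neq j}$, note that the selection event then reduces to the one-sided constraint $P_j > P_i$ so that $P_j$ is conditionally uniform on $(P_i,1)$, and observe that the resulting conditional law of $(P_j-P_i)/(1-P_i)$ is free of $(P_k)_{k\neq j}$, which yields both uniformity and independence upon integrating out. Your extra remark on tie-breaking is a point the paper leaves implicit but does not change the argument.
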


\begin{proof}
Choose any $\mathrm{P} \in H_j$. We have
\[
\mathrm{P}(\frac{P_{j}-P_{i}}{1-P_{i}} \leq t \mid S=\{i\}, (P_k)_{k \neq j} = (q_k)_{k \neq j}) 
= \mathrm{P}(\frac{P_{j}-q_{i}}{1-q_{i}} \leq t \mid P_j > q_i, (P_k)_{k \neq j} = (q_k)_{k \neq j}) 
= t,
\]
where we use that $(P_k)_{k \neq j}$ and $P_j$ are independent. Taking expectations on both sides, we have the required unconditional uniformity. Since the conditional probability does not depend on $(P_k)_{k \neq j}$, it follows that $(P_j-P_i)/(1-P_i)$ is independent of these $p$-values. 
\end{proof}

We will use $P_{i\mid S} = (P_j-P_i)/(1-P_i)$ for $j \neq i$. As in Section \ref{sec toy}, we see that adjustment for non-selection results in $p$-values that are smaller than their unadjusted counterparts, rather than larger. Procedure B will be a two-step method based on these selection-adjusted $p$-values. Let $i$ be such that $S=\{i\}$. Then, first, the procedure tests $H_i$, rejecting if $P_i/\min_{j \neq i} P_j \leq \alpha$. If it fails to reject $H_i$, the procedure stops. Otherwise it continues with a BH-procedure at level $\alpha' = n\alpha/(n-1)$ on the $n-1$ hypotheses $H_j$, $j \neq i$, using $(P_j-P_i)/(1-P_i)$ as $p$-values. This procedure clearly uniformly improves upon Procedure A if $n>1$. The validity of this procedure is proved by Lemma \ref{lem FDR} below.

\begin{lemma} \label{lem FDR}
Procedure B controls FDR given $S=\{i\}$.
\end{lemma}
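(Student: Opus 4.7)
I condition throughout on $S=\{i\}$ and split according to whether $H_i$ is true or false. Write $A$ for the event that the first step rejects $H_i$, and $V_B$, $R_B$ for the numbers of false and total BH rejections (both zero off $A$); the full procedure's FDP equals $(1_{H_i\text{ true}} + V_B)/(1_A + R_B)$ on $A$ and zero otherwise.

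If $H_i$ is true, the FDP is bounded by $1_A$ because $(1+V_B)/(1+R_B) \leq 1$, and Lemma \ref{lem p ratio i} gives $\mathrm{P}(A \mid S = \{i\}) = \alpha$. If $H_i$ is false, the FDP is $1_A V_B/(1+R_B)$. Let $V^{\mathrm{BH}}, R^{\mathrm{BH}}$ denote the numbers of false and total rejections of BH at level $\alpha'$ applied \emph{unconditionally} to the conditional $p$-values $T_j = (P_j - P_i)/(1 - P_i)$, $j \neq i$; the identities $V_B = 1_A V^{\mathrm{BH}}$ and $R_B = 1_A R^{\mathrm{BH}}$ yield the pointwise inequality $1_A V_B/(1+R_B) \leq V^{\mathrm{BH}}/(1+R^{\mathrm{BH}})$. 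Lemma \ref{lem p ratio j}, applied to each true null $j \neq i$, then supplies exactly the distributional input needed: conditional on $S=\{i\}$, the null $T_j$ are i.i.d.\ $U(0,1)$ and independent of the non-null $T_j$.

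The crucial step is a ``$+1$''-denominator sharpening of the classical BH FDR bound: for BH at level $\beta$ on $m$ independent super-uniform null $p$-values with $m_0$ nulls,
\[
\mathrm{E}\!\left[\frac{V^{\mathrm{BH}}}{1+R^{\mathrm{BH}}}\right] \;\leq\; \frac{m_0\,\beta}{m+1}.
\]
This follows from the standard leave-one-out identity $R^{\mathrm{BH}} = R^{(j)}$ on $\{j \in \mathrm{BH}\}$, independence of $T_j$ from the leave-one-out count $R^{(j)}$, super-uniformity of $T_j$, and the uniform bound $R^{(j)}/(1+R^{(j)}) \leq m/(m+1)$ arising from $R^{(j)} \leq m$. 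Substituting $m = n-1$, $m_0 = n_0 \leq n-1$, and $\beta = \alpha' = n\alpha/(n-1)$ yields $n_0\alpha/(n-1) \leq \alpha$, closing the second case.

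The main obstacle is that the textbook BH bound $n_0\alpha'/(n-1) = n_0 n\alpha/(n-1)^2$ does not suffice (it exceeds $\alpha$ when $n_0 = n-1$); the extra $+1$ in the FDP denominator, contributed by the mandatory first-stage rejection of $H_i$, is exactly what saves the factor $n/(n-1)$ needed to absorb the inflation $\alpha \mapsto \alpha' = n\alpha/(n-1)$. Recognising that this $+1$ must be fed into a refined BH proof, rather than into the standard $V/\max(R,1)$ bound, is the key analytical move.
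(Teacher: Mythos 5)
Your proof is correct and follows the same skeleton as the paper's: condition on $S=\{i\}$, split on whether $H_i$ is true, use Lemma \ref{lem p ratio i} for the null case and Lemma \ref{lem p ratio j} to reduce the non-null case to a BH problem on $n-1$ independent $p$-values. The only genuine divergence is how the ``$+1$'' contributed by the forced rejection of $H_i$ is exploited. You re-run the leave-one-out argument with $1+R$ in the denominator to obtain $\mathrm{E}[V^{\mathrm{BH}}/(1+R^{\mathrm{BH}})]\leq m_0\beta/(m+1)$; this is valid as written. The paper instead uses the pointwise identity
\[
\frac{|R'\cap T_\mathrm{P}|}{|R'|+1}=\frac{|R'|}{|R'|+1}\cdot\frac{|R'\cap T_\mathrm{P}|}{|R'|\vee 1}\leq\frac{n-1}{n}\cdot\frac{|R'\cap T_\mathrm{P}|}{|R'|\vee 1},
\]
takes expectations, and cites the textbook bound $\mathrm{E}\bigl[|R'\cap T_\mathrm{P}|/(|R'|\vee 1)\bigr]\leq\alpha'$ to get $\tfrac{n-1}{n}\alpha'=\alpha$. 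So your closing claim --- that the $+1$ \emph{must} be fed into a refined BH proof and cannot be combined with the standard $V/(R\vee 1)$ bound --- is not accurate: factoring out $|R'|/(|R'|+1)\leq (n-1)/n$ pointwise and then invoking ordinary BH gives exactly the same conclusion with less machinery (and recovers your sharper constant $n_0\alpha/(n-1)$ if one uses the exact $m_0\beta/m$ form of the BH identity). This does not affect the validity of your argument; both routes prove the lemma.
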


\begin{proof}
Let $R$ denote the rejected set of Procedure B. We condition on $S=\{i\}$. Choose any $\mathrm{P} \in M$.
We either have $\mathrm{P} \in H_i$, or $\mathrm{P} \notin H_i$. If $\mathrm{P} \in H_i$, then by Lemma \ref{lem p ratio i} we have that $\mathrm{P}(P_i/\min_{k \neq i} P_k \leq \alpha) \leq \alpha$, so $R = \emptyset$ with probability $1-\alpha$, so FWER is controlled given $S=\{i\}$, so FDR is controlled given $S=\{i\}$. If $\mathrm{P} \notin H_i$, then let $R'$ be the rejected set of the second step of the procedure. By Lemma \ref{lem p ratio j}, this step is applied on independent and uniform $p$-values, given $S=\{i\}$. By \cite{benjamini1995}, therefore, \[
\mathrm{E}_\mathrm{P}\Big(\frac{|R' \cap T_\mathrm{P}|}{|R'|\vee 1}\mid S=\{i\}\Big) \leq \alpha'.
\]
Since $R$ is either the empty set or $R' \cup \{i\}$, we have, using that $i \notin T_\mathrm{P}$ and $|R'|\leq n-1$,
\[
\frac{|R \cap T_\mathrm{P}|}{|R| \vee 1} = \frac{|R' \cap T_\mathrm{P}|}{|R'| + 1} = \frac{|R'|}{|R'|+1}\frac{|R' \cap T_\mathrm{P}|}{|R'| \vee 1} \leq \frac{n-1}{n}\frac{|R' \cap T_\mathrm{P}|}{|R'| \vee 1}.
\]
It follows that $$\mathrm{E}_\mathrm{P}\Big(\frac{|R \cap T_\mathrm{P}|}{|R|\vee 1} \mid S=\{i\}\Big) \leq \frac{n-1}{n} \mathrm{E}_\mathrm{P}\Big(\frac{|R' \cap T_\mathrm{P}|}{|R'|\vee 1}\mid S=\{i\}\Big) \leq \frac{n-1}{n} \alpha' = \alpha,$$
so Procedure B also controls FDR given $S=\{i\}$ when $\mathrm{P} \notin H_i$.
\end{proof}

For Procedure C, we ignore the conditioning on $S=\{i\}$, but still restrict rejection to $S$ only. This means that we can simply reject $H_i$ for small $P_i$. By independence of the $p$-values, we may reject $H_i$ when $P_i \leq 1-(1-\alpha)^{1/n}$. This is Procedure C. For Procedure D, the fully unconditional procedure, we simply choose the familiar BH-procedure. 

While Procedure B uniformly improves upon procedure A, the unconditional Procedures C and D do not. To see this, consider the situation that $P_2, \ldots, P_n$ are always equal to 1 (which they could be under the alternative, or if null $p$-values are allowed to be stochastically larger than uniform). In that case, Procedures A and B reject $H_1$ if $P_1 \leq \alpha$, while Procedures C and D need $P_1 \leq 1-(1-\alpha)^{1/n}$ and $\alpha/n$, respectively.

\begin{figure}[!ht]
    \centering
    \includegraphics[width=\textwidth]{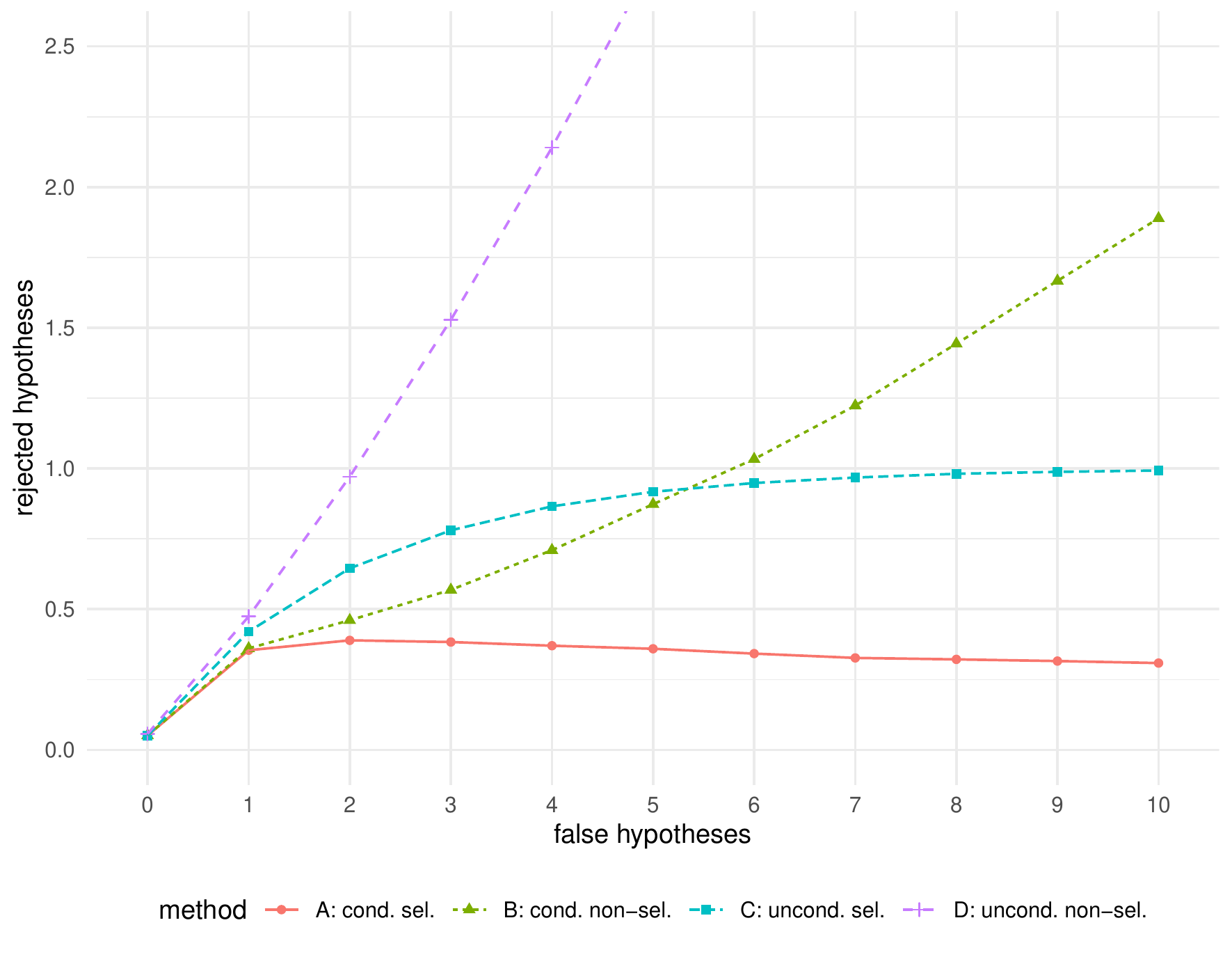}
    \caption{Expected number of rejections for four procedures defined in Section \ref{sec winner}. Based on $n=100$ hypotheses, $\alpha=0.05$, and $10^4$ simulations.} 
    \label{fig sim winner}
\end{figure}

We compared the four procedures in a simple simulation. Out of 100 hypotheses, from 0 to 10 were considered to be under the alternative, getting a $p$-value based on a one-sided normal test with a mean shift of 3; the remaining $p$-values were standard uniform. Figure \ref{fig sim winner} reports the expected number of rejected hypotheses for each of the methods A, B C and D. We see that the original conditional Procedure A is very much directed toward sparse alternatives, even losing power as the density of the signal increases. In contrast, all other methods gain power with increasing signal. The unconditional Procedure C, which like Procedure A only ever rejects the winner, rejects it with larger probability than Procedure A for all scenarios. The fully unconditional BH method, although not a uniform improvement, is the clear overall winner, rejecting most hypotheses on average even in the sparse scenarios.

\section{Data splitting and carving} \label{sec data splitting general}

Data splitting is perhaps the archetypal conditional selective inference method. It splits the data into two parts, using the first part for selecting $S$, and the second part for inference. Standard data splitting splits the data by subjects. Data carving is a more advanced version of data splitting \citep{fithian2014, panigrahi2018, schultheiss2021} that uses alternative ways of splitting the information in the data into independent parts, and use the data more efficiently that way. We show that data splitting and carving are inadmissible in general, at least for FWER control.

A special feature of data splitting is that the selection step that results in $S$ is completely unconstrained, as long as the selection remains independent of the second part of the data. This implies that the universe $U$ from which $S$ was chosen is in principle infinite. The inadmissibility conditions of Section \ref{sec holistic} still apply, however. We have a simple corollary to Proposition \ref{thm inadmissible FWER}, due to the infinite nature of $U$. The inefficiency of data splitting has been noted by other authors. \cite{jacobovic2022} established inadmissibility of \citeauthor{moran1973}'s \citeyearpar{moran1973} data-split test, and \cite{fithian2014} have shown that data splitting yields inadmissible selective
tests in exponential family models.

\begin{proposition} \label{thm data splitting}
Data splitting is inadmissible as a selective method for FWER control if $U$ is infinite and $S$ is almost surely finite.
\end{proposition}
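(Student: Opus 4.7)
The plan is to derive Proposition~\ref{thm data splitting} as a direct corollary of Proposition~\ref{thm inadmissible FWER}, whose two hypotheses I must verify for data splitting.

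First I would observe that data splitting provides conditional FWER control on $S$ essentially for free: the inference step applies a valid FWER-controlling procedure to the independent second half $X''$, so the conditional distribution of $X''$ given $S$ (or equivalently given $X'$) equals its marginal distribution, and standard FWER control on the finite family $S$ then gives $\mathrm{P}(R \cap T_\mathrm{P} \neq \emptyset \mid S) \leq \alpha$ for all $\mathrm{P} \in M$.

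Next, I would check the remaining hypothesis of Proposition~\ref{thm inadmissible FWER}, namely that there exists some $\mathrm{P} \in M$ and some realisation $S \subsetneq U$ with $\mathrm{P}(R = S \mid S) > 0$. The strict inclusion $S \subsetneq U$ is automatic from the assumptions, since every realised $S$ is finite while $U$ is infinite. For the positive-probability requirement I would split into cases: if $\mathrm{P}(S = \emptyset) > 0$ for some $\mathrm{P} \in M$, then trivially $R = S$ on that event; otherwise, for any realisation $S = s$ of positive probability, I would choose $\mathrm{P} \in M$ under which all hypotheses in $s$ are false alternatives with $p$-values concentrated near zero, so that any standard inference step (for example a Bonferroni-type test on $X''$) rejects every element of $s$ with positive conditional probability.

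With both hypotheses of Proposition~\ref{thm inadmissible FWER} verified, the construction in its proof produces an improvement $R'$ that, on the event $\{R = S\}$, uses its unspent $\alpha$-budget to reject additional hypotheses from $U \setminus S$. Because $U$ is infinite, there is always ample room to do so, and the resulting $R'$ uniformly dominates $R$ while retaining conditional, hence unconditional, FWER control. This is the desired inadmissibility claim. The main obstacle, to the extent that there is one, is merely checking $\mathrm{P}(R = S \mid S) > 0$ for some realisation; this requires only a mild non-degeneracy assumption on the model and is trivial for any realistic data-splitting setup. The work is really done by the infinity of $U$ combined with the almost-sure finiteness of $S$, which together guarantee the strict inclusion $S \subsetneq U$ demanded by Proposition~\ref{thm inadmissible FWER}.
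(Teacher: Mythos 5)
Your overall strategy---deriving the result as a corollary of Proposition~\ref{thm inadmissible FWER}, with the infiniteness of $U$ and finiteness of $S$ supplying $S \subsetneq U$---is exactly the paper's route, and your verification of conditional FWER control and of the strict inclusion is fine. The gap is in the step you yourself flag as the ``main obstacle'': establishing $\mathrm{P}(R=S\mid S)>0$. You argue this by choosing a $\mathrm{P} \in M$ under which all selected hypotheses are false with $p$-values near zero and by assuming the second-stage procedure is ``standard'' (Bonferroni-like), so that it rejects all of $S$ with positive probability. Neither assumption is available in the generality of the proposition: the model $M$ need not contain such a distribution (the hypotheses could be logically constrained, or null $p$-values could be the only thing $M$ pins down), and, more importantly, the second-stage procedure is an \emph{arbitrary} conditionally FWER-controlling procedure. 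A procedure that by construction never rejects all of $V$ given $S=V$ (for instance, one that always leaves one hypothesis untested) controls conditional FWER perfectly well but has $\mathrm{P}(R=S\mid S=V)=0$ for every $\mathrm{P} \in M$, and then the hypothesis of Proposition~\ref{thm inadmissible FWER} simply fails. Calling this a ``mild non-degeneracy assumption'' does not close the gap, because the proposition asserts inadmissibility without any such assumption.

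The paper closes exactly this hole with a dichotomy. Either $\mathrm{P}(R=V\mid S=V)>0$ for some $\mathrm{P}$ and some realisation $V$, in which case Proposition~\ref{thm inadmissible FWER} applies as you describe; or $\mathrm{P}(R=V\mid S=V)=0$ for \emph{all} $\mathrm{P}\in M$, in which case the procedure is shown to be inadmissible directly: writing the second stage as a sequential rejection procedure with next-function $\mathcal{N}$, one takes the largest reachable set $W\subset V$, notes that $\mathcal{N}(W)=\emptyset$ almost surely, and improves the procedure by letting $\mathcal{N}'(W)=\{i\}$ with probability $\alpha$ for some $i\in V\setminus W$; the monotonicity and single-step conditions of the sequential rejection principle are trivially satisfied, so conditional FWER control is retained while the rejection set strictly grows with positive probability. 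Either branch yields inadmissibility, so no assumption on $M$ or on the form of the inference step is needed. Your proof needs this second branch (or an equivalent argument) to be complete.
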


\begin{proof}
By Proposition \ref{thm inadmissible FWER}, since $S \neq U$ almost surely, it is sufficient to show that $\mathrm{P}(R=S) > 0$ for some $\mathrm{P} \in M$. 
Choose any $V \subseteq U$. We will show that $\mathrm{P}(R=S | S=V) > 0$ for some $\mathrm{P} \in M$. Conditional on $S=V$, the set $R$ is the result of a procedure with conditional FWER control on $V$. We write the procedure as a sequential rejection procedure \citep{goeman2010}. Let $\mathcal{N}$ be the next function in that formulation. Suppose that the procedure is admissible, and that $\mathrm{P}(R=V | S=V)=0$ for all $\mathrm{P} \in M$. We will derive a contradiction. Let $W \subset V$ be the largest set such that $\mathrm{P}(R=W|S=V)>0$ for at least one $\mathrm{P} \in M$. Then the procedure is equivalent to a procedure that has $\mathcal{N}(W) = \emptyset$ almost surely. A uniform improvement is, therefore, a procedure that has $\mathcal{N}'(W) = \{i\}$ with probability $\alpha$, where $i$ is the smallest element of $S \setminus W$. This is a uniform improvement, since the probability that the new procedure rejects more is $\alpha\mathrm{P}(R=W|S=V) > 0$. To check that the new procedure retains FWER control given $S=V$, we need to check the monotonicity and single step conditions of \cite[Theorem 1]{goeman2010}, both of which are trivial.
It follows that the procedure we started with is inadmissible, and we have the contradiction we need.   
\end{proof}

Proposition \ref{thm data splitting} says that a data splitting procedure is inadmissible because the analyst always runs the risk of selecting too few hypotheses for $S$. If all hypotheses in $S$ are rejected, the classic data splitting procedure must stop, and loses out on some rejections it could have made. A uniform improvement would be a procedure that selects not just $S$, but an infinite sequence of pairwise disjoint continuations $S_1, S_2, \ldots$. This procedure would always continue testing the next selected set after the previous one has been completely rejected. All of $S_1, S_2, \ldots$ must still be chosen using the first part of the data only. Control is, therefore, still conditional on the first part of the data. 

Proposition \ref{thm data splitting} speaks about FWER control only. We conjecture that the same result holds for FDR, since FDR by its nature is more lenient than FWER for making further rejections (in $S_2, S_3,\ldots$) if has already made many rejections (all of $S_1$). We do not have a general proof for this, but, as an example, consider FDR-controlling methods of the type discussed by \citet{li2017accumulation}. These estimate FDR along an incremental sequence of potential rejection sets, rejecting the largest set for which the FDR estimate is less than $\alpha$. Such procedures would gain power if the sequence is continued beyond $S$ into $S_1, S_2, \ldots$.

With data splitting, the split of the data in two parts is arbitrary by nature, and the question how much of the data to use for the selection and inference steps arises naturally. Some authors have proposed repeated splitting \citep{meinshausen2009, diciccio2020}. Such methods are unconditional: while inference in each random split is conditional on the $S$ from that split, control in the final analysis unconditional. Multiple data splitting can, therefore, also be seen as an unconditional improvement of a conditional method.

\section{Third example: data splitting}

In Section \ref{sec data splitting general}, we showed that data splitting is inadmissible as a conditional method for FWER control. If we are prepared to move away from conditional control, we can often improve methods further, although not always uniformly. We investigate a specific simple case in more detail.

Let $U = \{1, \ldots, n\}$ be finite, and suppose the analysis on the two parts of the data results in pairs of independent $p$-values $P_{1,i}, P_{2,i}$, for $H_i$, for $i=1,\ldots, n$. A natural choice for $S$ is $S = \{i\colon P_{1,i} \leq \lambda\}$ for some fixed $0 \leq \lambda \leq 1$. With this choice, a conditional Bonferroni procedure would reject
\begin{equation} \label{eq data split cond}
R  = \{i \in S\colon P_{2,i} \leq \alpha/|S|\}.
\end{equation}
We can rewrite this as $R  = \{i \in U\colon Q_i \leq \lambda \alpha/|S|\}$, with $Q_i = \lambda P_{2,i}$, if $P_{1,i} \leq \lambda$, and $Q_i = 1$, otherwise. Here, $Q_i$ is a valid unconditional $p$-value, since
$
\mathrm{P}(Q_i \leq t) = \mathrm{P}(P_{1,i} \leq \lambda)\mathrm{P}(\lambda P_{2,i} \leq t) = \lambda \min(t/\lambda, 1) \leq t.
$
We could also have constructed an unconditional procedure on $U$ based on the same $Q_i$. This would reject
\begin{equation} \label{eq data split uncond}
R'  = \{i \in U\colon Q_i \leq \alpha/n\}.
\end{equation}

Comparing the conditional and unconditional procedures (\ref{eq data split cond}) and (\ref{eq data split uncond}), we see that $R' \subseteq R$ whenever $|S| \leq \lambda n $, and $R' \supseteq R$ otherwise. The conditional procedure, seemingly, only has a chance to reject more than the unconditional if $|S|$ is smaller than its expectation under the complete null hypothesis with uniform $p$-values. The more signal in the data, the larger we would expect $S$ to be, and the smaller the conditional $R$ becomes relative to the unconditional $R'$. The conditional procedure only has a chance to be better only if null $p$-values are stochastically larger than uniform. This argument generalizes immediately beyond Bonferroni to other symmetric monotone procedures. E.g., the unconditional procedure of \citet{benjamini1995} on $Q_i$, $i \in U$ dominates its conditional equivalent on $Q_i$, $i \in S$ if $|S| > \lambda n$.

In the example just discussed, with $S = \{i\colon P_{1,i} \leq \lambda\}$, if $\lambda$ was fixed a priori and $P_{1,1}, \ldots, P_{1,n}$ are independent, then we are not using all the information remaining after selecting $S$. Rather than splitting the data into $P_{1,1}, \ldots, P_{1,n}$ used for finding $S$ and $P_{2,1}, \ldots, P_{2,n}$ used for testing, the data can be split into $1_{\{P_{1,1} \leq \lambda\}}, \ldots, 1_{\{P_{1,n} \leq \lambda\}}$  used for finding $S$ and $P_{1,1\mid S}, \ldots, P_{1,n\mid S}$ and $P_{2,1}, \ldots, P_{2,n}$ used for testing. Such an alternative splits are known as data carving. They tune the amount of information that is allocated to the selection and testing steps more efficiently. However, from the perspective of unconditional procedures, this still seems a rather convoluted way of combining the information from $P_{1,i}$ and $P_{2,i}$. A natural and more powerful choice would be, e.g., a Fisher combination, equivalent to rejecting for low values of $P_{1,i}\times P_{2,i}$, or, even more naturally, a single $p$-value calculated form a direct analysis of the combined data. Such analyses also obviate the need for choosing $\lambda$.

\section{Selective confidence intervals and the False Coverage Rate} \label{sec FCR}

So far we have focused mostly on rejection of hypotheses based on $p$-values. However, a large part of the selective inference literature focuses on selection-adjusted confidence intervals, controlling the (conditional) FCR. In this section we will apply the holistic perspective to selective inference based on confidence intervals.

A confidence interval is a random subset $C \subseteq M$ of the model space $M$. A confidence interval is said to have $(1-\alpha)$-coverage if, for all $\mathrm{P} \in M$, 
\[
\mathrm{P}(\mathrm{P} \in C) \geq 1-\alpha.
\]
We define confidence intervals always as a subset of the full parameter space. We can do this without loss of generality. For example, if our parameter space for $\theta = (\theta_1, \theta_2)$ is $\mathbb{R}^2$, we can write the confidence interval $[a,b]$ for $\theta_1$ as the ``interval'' $C = [a,b] \times \mathbb{R}$ for $\theta$. This greatly simplifies notation. We keep using the word interval, though $C$ can be any region. 

In the selective inference context, we have $S \subseteq U$ be a random set of confidence intervals of interest, where $U$, as before, is the universe from which we are selecting. The collection of confidence intervals depends on $S$, and we write $C_{i\mid S}$, $i \in S$. The confidence intervals should have 
conditional $(1-\alpha)$-coverage if, for all $\mathrm{P} \in M$, and for $i \in S$,
\begin{equation} \label{eq sel coverage}
\mathrm{P}(\mathrm{P} \in C_{i\mid S}\mid S) \geq 1-\alpha.
\end{equation}

If we report more than one confidence interval we must account for multiplicity. We can demand that the confidence intervals are (conditionally) \emph{simultaneous over the selected}, i.e., surely for all $\mathrm{P} \in M$,
\begin{equation} \label{eq simsel}
\mathrm{P}\Big(\mathrm{P} \in \bigcap_{i\in S} C_{i\mid S} \,\big\vert\, S\Big) \geq 1-\alpha,
\end{equation}
where the unconditional variant drops the conditioning on $S$. Similarly, we can control FCR. The unconditional variant demands that,  for all $\mathrm{P} \in M$, 
\begin{equation} \label{eq CI FCR uncond}
\mathrm{E}_P \bigg[ \frac{|\{i \in S\colon \mathrm{P} \in C_{i\mid S}\}|} {|S| \vee 1} \bigg] \geq 1-\alpha.
\end{equation}
Conditional on $S$, this simplifies to the demand that, surely for all $\mathrm{P} \in M$, 
\begin{equation} \label{eq CI FCR}
\frac1{|S| \vee 1} \sum_{i\in S} \mathrm{P}\Big(\mathrm{P} \in C_{i\mid S} \,\big\vert\, S\Big) \geq 1-\alpha.
\end{equation}
It is one of the attractive properties of selection-adjusted confidence intervals that they control FCR without further adjustment, since (\ref{eq sel coverage}) implies (\ref{eq CI FCR}); see also \citet{weinstein2013}, Theorem 2; \citet{lee2016}, Lemma 2.1; \citet{fithian2014}, Proposition 11.

For confidence intervals we have the following analogue of Observation \ref{thm main}. 

\begin{observation} \label{thm simsel}
If $C_i$, $i \in S$, control (\ref{eq simsel}) or (\ref{eq CI FCR}) conditionally on $S$, then there exist $C'_i$, $i \in U$, such that $C'_i \subseteq C_i$ for $i \in S$ surely, that control (\ref{eq simsel}) or  (\ref{eq CI FCR}), respectively, with $S = U$.   
\end{observation}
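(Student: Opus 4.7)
The proof will be a direct analogue of Observation \ref{thm main}: extend the selective intervals to the full universe by padding with trivial intervals, then invoke the tower property. Concretely, I set $C'_i = C_{i\mid S}$ for $i \in S$ and $C'_i = M$ for $i \in U \setminus S$. The containment $C'_i \subseteq C_{i\mid S}$ for $i \in S$ is immediate (indeed equality), and since $\mathrm{P} \in M$ holds by definition for every $\mathrm{P} \in M$, the padding intervals never fail to cover the true distribution.

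For the simultaneous case (\ref{eq simsel}), the intersection collapses: $\bigcap_{i \in U} C'_i = \bigcap_{i \in S} C_{i\mid S}$, since intersecting with $M$ is vacuous. Taking full expectations in the conditional coverage assumption then gives, for all $\mathrm{P} \in M$,
\[
\mathrm{P}\Big(\mathrm{P} \in \bigcap_{i \in U} C'_i\Big) = \mathrm{E}\Big[\mathrm{P}\Big(\mathrm{P} \in \bigcap_{i \in S} C_{i\mid S} \,\Big|\, S\Big)\Big] \geq 1-\alpha,
\]
which is the target (\ref{eq simsel}) with the deterministic selection $S = U$. This argument is insensitive to whether $U$ is finite or infinite.

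For the FCR case (\ref{eq CI FCR}), with $U$ finite and nonempty, I decompose the number of covering intervals as $|\{i \in U\colon \mathrm{P} \in C'_i\}| = |\{i \in S\colon \mathrm{P} \in C_{i\mid S}\}| + |U \setminus S|$, take the conditional expectation given $S$, and substitute the assumption $\sum_{i \in S} \mathrm{P}(\mathrm{P} \in C_{i\mid S} \mid S) \geq (1-\alpha)(|S| \vee 1)$. Dividing by $|U|$ and then taking the outer expectation yields a lower bound of $1 - \alpha\,\mathrm{E}[|S|]/|U| \geq 1 - \alpha$. The edge case $S = \emptyset$ is handled automatically by the padding, since every $i \in U$ then has $C'_i = M$ and coverage equals $1$.

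As with Observation \ref{thm main}, no genuine obstacle arises: once the extended procedure is free to choose $C'_i$ for $i \notin S$, setting those intervals equal to the entire parameter space is both permitted and cost-free, and the tower property closes the argument. The only mild care point is the FCR formulation when $|U|$ is infinite, which falls outside the natural scope of the statement since the FCR denominator $|U| \vee 1$ is then ill-behaved.
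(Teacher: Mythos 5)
Your construction is exactly the one the paper uses (take $C'_i = C_{i\mid S}$ for $i \in S$ and $C'_i = M$ otherwise, then apply the tower property), and your more detailed verification of the simultaneous and FCR cases is correct. The paper simply declares this trivial without writing out the computations, so your proposal matches its approach in substance.
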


This observation is, again, trivial. We simply take $C'_i = C_i$ if $i \in S$ and $C'_i = M$ otherwise. Like Observation \ref{thm main}, Observation \ref{thm simsel} answers the question what the optimal choice of $S$ is, if we are interested in confidence intervals that are as narrow as possible. The answer is that $S=U$ is the optimal choice.

Like Observation \ref{thm main}, Observation \ref{thm simsel} does not say whether taking $S=U$ can actually help to shorten the confidence intervals. However, it is easy to find examples in which this is possible, certainly for FCR control. Take, for example, the original FCR-controlling method of \cite{benjamini2005}, which constructs marginal confidence intervals of level $1-|S|\alpha/|U|$. For this method, taking $S=U$ clearly results in the narrowest confidence intervals. This observation holds generally for FCR control: as confidence intervals tends to become narrower as $S$ becomes larger, there is every incentive for the analyst to choose $S$ as large as possible, since they will obtain both more and narrower confidence intervals. In the extreme case that $S=U$, FCR control reduces to average marginal coverage, an even weaker criterion than marginal coverage, which is achieved by uncorrected confidence intervals. 

Specifically for the property of simultaneous over the selected, we have the following additional observation.

\begin{observation}
If $C_i$, $i \in S$, are unconditionally simultaneous over the selected $S$, then for every $S' \subseteq U$, there exists $C'_i$, $i \in S'$, which are unconditionally simultaneous over the selected $S'$, such that $C'_i \subseteq C_i$ surely for all $i \in S \cap S'$. \end{observation}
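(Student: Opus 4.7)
My plan is to mimic the construction that made Observation \ref{thm simsel} trivial: pad the original collection with the entire model space $M$ on the indices where no interval was previously defined. Concretely, I would set
\[
C'_{i\mid S'} = \begin{cases} C_{i\mid S} & \text{if } i \in S \cap S', \\ M & \text{if } i \in S' \setminus S. \end{cases}
\]
The condition $C'_i \subseteq C_i$ surely for $i \in S \cap S'$ then holds by construction (with equality), so the only work is verifying the simultaneous-over-the-selected property at the new random set $S'$.

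For the coverage step I would simply observe that enlarging the intervals on $S' \setminus S$ to $M$ makes them trivially cover $\mathrm{P}$, so
\[
\bigcap_{i\in S'} C'_{i\mid S'} \;=\; \bigcap_{i\in S\cap S'} C_{i\mid S},
\]
and since $S \cap S' \subseteq S$,
\[
\bigcap_{i\in S} C_{i\mid S} \;\subseteq\; \bigcap_{i\in S\cap S'} C_{i\mid S}.
\]
Taking probabilities and applying the hypothesized unconditional simultaneity over $S$ yields
\[
\mathrm{P}\Big(\mathrm{P} \in \bigcap_{i\in S'} C'_{i\mid S'}\Big) \;\geq\; \mathrm{P}\Big(\mathrm{P} \in \bigcap_{i\in S} C_{i\mid S}\Big) \;\geq\; 1-\alpha,
\]
for every $\mathrm{P} \in M$, which is exactly unconditional simultaneity over the selected $S'$.

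I do not foresee a real obstacle here: the result is of the same flavour as Observation \ref{thm simsel}, and the argument only uses monotonicity of intersections and of probability. The one point worth being explicit about is that the construction is legitimate regardless of whether $S'$ is a deterministic subset of $U$ or itself a data-dependent random collection, since the rule defining $C'_{i\mid S'}$ is measurable in the joint pair $(S,S')$ and does not rely on any relationship between the two selections. This, together with the fact that no modification is made on $S \cap S'$, is what ensures both the containment and the coverage assertions simultaneously.
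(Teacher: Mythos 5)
Your construction is exactly the one the paper uses ($C'_i = C_i$ on $S \cap S'$ and $C'_i = M$ on $S' \setminus S$), and your verification via monotonicity of the intersection and of probability is precisely the reasoning the paper leaves implicit. The proposal is correct and takes essentially the same approach.
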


To see that this observation is true, simply take $C'_i = C_i$ for $i \in S \cap S'$, and $C_i' = M$ for $i \in S' \setminus S$.

The observation says that any unconditional method that is simultaneous on the selected for some $S \subseteq U$, is also simultaneous on the selected on any other $S' \subseteq U$. This suggests, at least for unconditional methods, that simultaneous on the selected is not a different concept from just simultaneous over $U$, i.e., simultaneous.

\section{Fourth example: post-selection inference for the lasso}

One of the major application areas of conditional selective inference is post-selection inference on the parameters of a lasso model. A major breakthrough here has been the polyhedral lemma \citep{lee2016}, which allows calculation of $p$-values and confidence intervals for regression coefficients, conditional on their selection by a lasso algorithm. The toy example of Section \ref{sec toy} is in fact a special case of the approach of \cite{lee2016}, and we will not discuss that again. In this section we consider a variant due to \citet{liu2018} of lasso-based selective inference, in which additional interesting issues arise. 

The set-up is as follows. We assume the usual linear model setting, in which we have a fixed $n \times m$ design matrix $X$, and assume that $Y = X\beta + \epsilon$, where $\beta$ (an $m$-vector) is unknown, and $\epsilon \sim \mathcal{N}(0,\sigma^2 I_n)$, where $\sigma^2$ is assumed known. In this model we fit a lasso regression with a fixed penalty parameter $\lambda$. Let $\tilde\beta_i$, $i=1,\ldots,m$, be the resulting coefficient estimates. We define the selected set as $S = \{i\colon \tilde\beta_i \neq 0\}$. 

\citet{liu2018} define selection-adjusted confidence intervals by not conditioning on the full selected set $S$, but only on the selection of the confidence interval of interest. They require that, for all $\mathrm{P} \in M$, and for $i \in S$,
\begin{equation} \label{eq liu}
\mathrm{P}(\mathrm{P} \in C_{i\mid i\in S}\mid i \in S) \geq 1-\alpha.
\end{equation}
Condition (\ref{eq liu}), while implied by (\ref{eq sel coverage}), is substantially weaker, because it conditions on less information. In a part of their paper \cite{fithian2014} considered conditioning on $i \in S$, rather than on the full $S$ for testing, recognizing that less conditioning leads to more information for inference. \citet{liu2018} adopted this viewpoint for confidence intervals, arguing that by conditioning on this minimal event, more variation remains in the data for determining the precise value of $\beta_i$. The methodology of \citet{jewell2019} and \cite{Neufeld2022} shares the `general recipe' of \cite{liu2018}, stating that the ultimate goal is to fulfill equation (\ref{eq liu}) rather than (\ref{eq sel coverage}) when it comes to selective inference.

Indeed, the conceptual difference between the two properties (\ref{eq liu}) and (\ref{eq sel coverage}) is huge, but there is a steep price to pay for conditioning only on $i \in S$. Complications arise in subsequent error rate control because the coverage of each $C_{i\mid i\in S}$ is conditional on a different event for every $i\in S$. Because of this, the property, mentioned in Section \ref{sec FCR}, that selection-adjusted coverage (\ref{eq sel coverage}) implies FCR control (\ref{eq CI FCR}), is lost: (\ref{eq liu}) does not imply (\ref{eq CI FCR}) or even (\ref{eq CI FCR uncond}). Without a common conditioning event, there is no hope for combining the confidence intervals into any combined conditional error rate. For example, making $|S|$ confidence intervals, each conditional on $j \in S$, at level $1-\alpha/|S|$ does not guarantee simultaneous coverage, even unconditionally; we need confidence intervals at level $1-\alpha/m$ for that. In Appendix \ref{sec_extra_fourthexample} we give a numerical example showing lack of conditional and unconditional FCR control of the confidence intervals of \cite{liu2018} at confidence level $1-\alpha$, and lack of conditional and unconditional simultaneous control at confidence level $1-\alpha/|S|$. Lack of FCR control of the method of \cite{liu2018} was also observed by \citet[Table 1]{panigrahi2022approximate}, but without explanation.

By Observation \ref{thm simsel}, there is no reason to be selective and report confidence intervals for $i \in S$ only. Indeed, the premise of restricting attention to the selection of $S$ is often that variables not in $S$ are not important for the outcome.  Confidence intervals or $p$-values for non-selected variables are an important instrument to check this. It is straightforward to extend the theory of \cite{liu2018} to calculate $C_{i\mid i\notin S}$, $i \notin S$, for the non-selected regression coefficients, and we give the mathematical details in Appendix \ref{sec_extra_fourthexample}. Figure \ref{fig intervals} display 90\%-confidence intervals for all eight variables of the famous Prostate data set \citep{stamey1989prostate} as a function of $\lambda$, with intervals for selected coefficients in black and for non-selected ones in grey. We see a similar paradoxical effect as in the toy example: conditional intervals of selected variables tend to move towards 0, while confidence intervals for non-selected variables tend to move away from 0 (see also Figure \ref{fig pvalues} in Appendix B). Both are equal to the unconditional intervals for very large or small $\lambda$, when the probability of selection is close to 0 or 1, but tend to become longer close to the critical threshold for selection. \cite{kivaranovic2020, kivaranovic2021}  provide conditions under which intervals obtained from the polyhedral lemma are either bounded or unbounded. The intervals constructed through the method of \citet{liu2018} have bounded lengths when they are conditional on selection, whereas the intervals are potentially unbounded when they are conditional on non-selection.

\begin{figure}[!ht]
    \centering
    \includegraphics[width=0.8\textwidth]{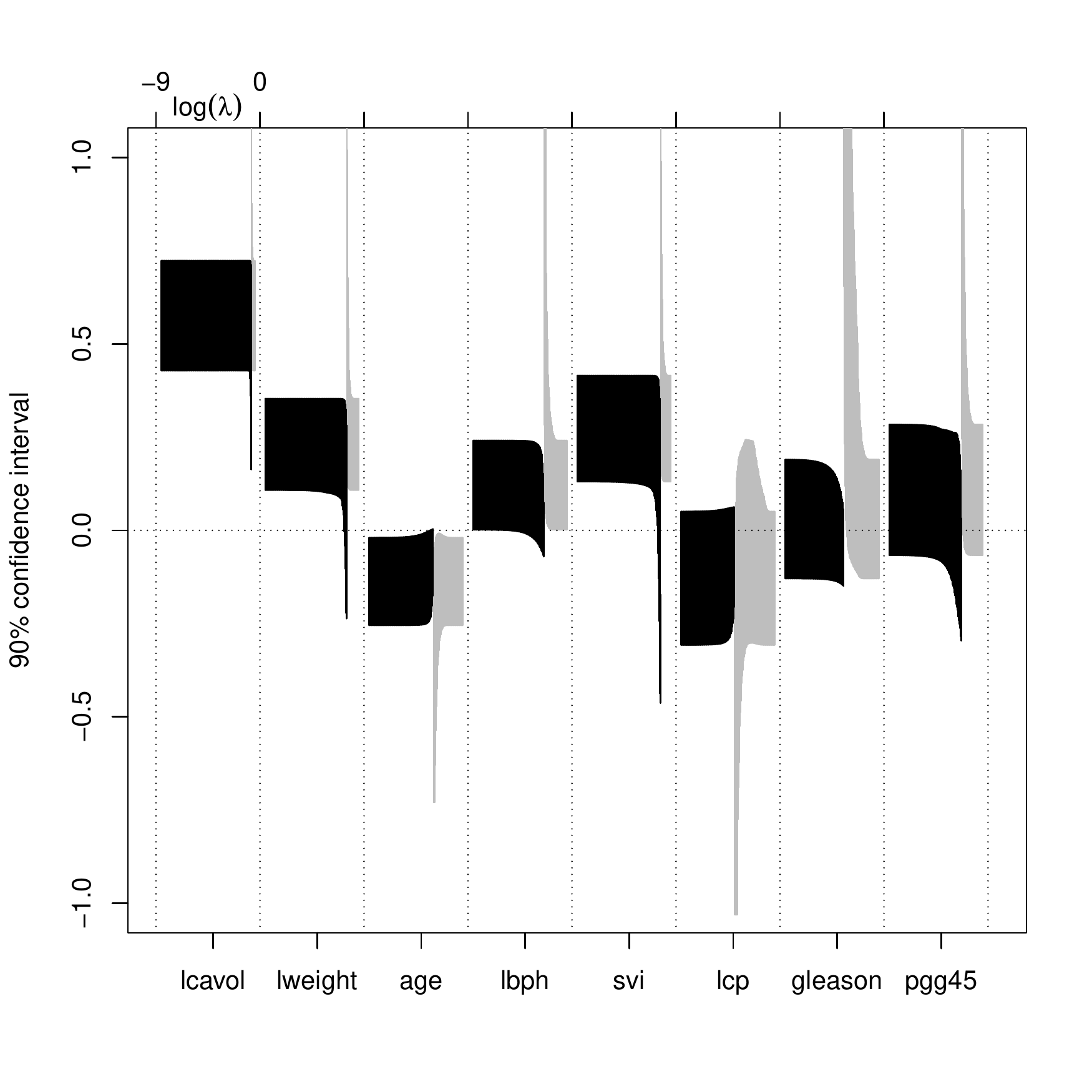}
    \caption{Selective conditional confidence intervals using the method of \cite{liu2018} applied to the variables of the Prostate data, as a function of $\lambda$. Black intervals are conditional on selection by the lasso, grey ones are conditional on non-selection.} 
    \label{fig intervals}
\end{figure}

The intervals $C_i$, defined as $C_{i\mid i\in S}$, if $i \in S$, and $C_{i\mid i\notin S}$, if $i \notin S$, are unconditional intervals and, due to the absence of a common conditioning event, have no conditional interpretation as a collection. We may present them all as uncorrected intervals, but if we aim to present only a selection $V \subseteq \{1,\ldots, m\}$ from these intervals we must correct for this using methods to correct unconditional intervals. We may use level $1-\alpha/m$ to obtain simultaneous coverage over the selected intervals, or we may use the method of \cite{benjamini2005} and use level $1-|V|\alpha/m$ to control FCR. This applies if $V=S$ or for any other $V$. There is no way in which the conditioning of the intervals on $i\in S$ helped for this correction step; in fact, it merely discarded valuable information, lengthening the intervals and moving them towards zero. Arguably, the superior method is simply to start from regular unconditional intervals. This does not provide a uniform improvement of the method of \cite{liu2018}, but it avoids the paradoxes associated with conditioning, and tends to produce more attractive intervals.

\section{FCR for hypothesis testing} \label{sec FCR testing}

Confidence intervals can be used to test hypotheses, and the properties of confidence intervals imply error control guarantees on the hypotheses. Here, we look briefly into the error rate (\ref{eq FCR}) implied by FCR control (\ref{eq CI FCR}), which is used by some authors \citep{fithian2014}. Assume that we have a collection $H_i$, $i \in S$, of hypotheses, one for every confidence interval. 

If confidence intervals $C_{i\mid S}$, $i \in S$, have conditional FCR control, then $R = \{i\colon H_i \cap C_{i\mid S} = \emptyset\}$ controls the error rate (\ref{eq CI FCR}). Observation \ref{thm main} does not directly apply, since the error rate depends not just on $R$ in $S$. However, that observation immediately generalizes.

\setcounter{observation}{0}
\begin{observation}[\textbf{continued}] 
Observation 1 also holds for error rates $e_\mathrm{P}(R, S)$ that depend on $S$, if $S \subseteq S'$ implies that $e_\mathrm{P}(R, S) \geq e_\mathrm{P}(R, S')$.
\end{observation}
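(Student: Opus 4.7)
The plan is to mimic the original trivial proof of Observation 1, but insert one extra inequality at the start to absorb the $S$-dependence of the error rate using the given monotonicity hypothesis.

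First I would take $R' = R$, exactly as in the original argument, so that $R' \supseteq R$ surely holds with equality. The point now is to bound $\mathrm{E}_\mathrm{P}[e_\mathrm{P}(R', U)]$ rather than $\mathrm{E}_\mathrm{P}[e_\mathrm{P}(R')]$. Since $S \subseteq U$ surely (by the very definition of the universe $U$), the monotonicity hypothesis $S \subseteq S' \Rightarrow e_\mathrm{P}(R, S) \geq e_\mathrm{P}(R, S')$ applied with $S' = U$ gives, surely,
\[
e_\mathrm{P}(R', U) \;=\; e_\mathrm{P}(R, U) \;\leq\; e_\mathrm{P}(R, S).
\]

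Next I would take unconditional expectations on both sides and apply the tower property, together with the conditional error control assumed for the original procedure:
\[
\mathrm{E}_\mathrm{P}[e_\mathrm{P}(R', U)] \;\leq\; \mathrm{E}_\mathrm{P}[e_\mathrm{P}(R, S)] \;=\; \mathrm{E}_\mathrm{P}\bigl[\mathrm{E}_\mathrm{P}[e_\mathrm{P}(R, S)\mid S]\bigr] \;\leq\; \mathrm{E}_\mathrm{P}[\alpha] \;=\; \alpha.
\]
This establishes the required unconditional bound for the procedure $U \to R'$, which by construction also satisfies $R' \subseteq U$ and $R' \supseteq R$ surely.

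There is essentially no obstacle here: the monotonicity condition is tailor-made to ensure that evaluating the error rate at the larger set $U$ (keeping $R$ fixed) is at most as harmful as evaluating it at $S$, after which the proof reduces verbatim to the tower-property argument already used for Observation 1. The only mild care needed is in identifying the right ``$S$'' to plug into $e_\mathrm{P}(R', \cdot)$ for the new procedure — namely $U$ itself, which is consistent with the holistic reading that $U$ plays the role of $S$ in any non-selective method $U \to R'$.
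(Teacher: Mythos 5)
Your proposal is correct and takes essentially the same route as the paper, which justifies the continued observation by noting that the monotonicity condition makes replacing $S$ by $U$ more lenient, so that taking $R'=R$ and applying the tower property as in Observation 1 suffices. You have merely written out explicitly the inequality $e_\mathrm{P}(R,U)\leq e_\mathrm{P}(R,S)$ and the ensuing chain of expectations that the paper leaves implicit.
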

\setcounter{observation}{3}

The extra condition holds for the FCR rate (\ref{eq FCR}). The condition implies that replacing $S$ by $U$ makes the error rate more lenient, so for controlling the error rate it helps to take $S=U$, and the result is still trivial. FCR is a paradoxical error rate from the holistic perspective, since it is decreasing in $|S|$ for the same $R$. This gives an immediate incentive for an analyst to choose $S$ as large as possible. 

FCR is sometimes motivated \citep{zhao2020} by the property that FCR control reduces to FDR control when $S=R$. For this property to hold, we must have that $S=R$ as random variables; it is not sufficient that the realised values are identical. About conditional control of FCR (or other error rates) when $S=R$ as random variables we have the following observation. We call a testing problem trivial on $\mathcal{S}$ if $e_\mathrm{P}(V) \leq \alpha$ for all $\mathrm{P} \in M$ and all $V \in \mathcal{S}$, i.e., if the error rate is already bounded by $\alpha$ everywhere. 

\begin{observation} \label{thm R=S}
Suppose a conditional selective method $U \to S \to R$ has $R=S$ surely. Then the testing problem is trivial on $\mathcal{S}$.
\end{observation}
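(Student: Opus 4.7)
The plan is to unwind the two hypotheses and chase a single identity. Fix an arbitrary $V \in \mathcal{S}$ and an arbitrary $\mathrm{P} \in M$. By the standing assumption of conditional control stated at the start of Section 3, we have $\mathrm{E}_\mathrm{P}[e_\mathrm{P}(R) \mid S = V] \leq \alpha$ for every $V \in \mathcal{S}$ (not merely those realized with positive probability under $\mathrm{P}$, thanks to how the paper frames this requirement). This is the only piece of structure we need to exploit.

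Next, I would use the hypothesis that $R = S$ surely to collapse the conditional expectation. On the event $\{S = V\}$ we have $R = S = V$ pointwise, so $e_\mathrm{P}(R)$ is identically equal to the constant $e_\mathrm{P}(V)$ on that event. A conditional expectation of a constant is that constant, giving
\[
e_\mathrm{P}(V) = \mathrm{E}_\mathrm{P}[e_\mathrm{P}(V) \mid S = V] = \mathrm{E}_\mathrm{P}[e_\mathrm{P}(R) \mid S = V] \leq \alpha.
\]
Since $V \in \mathcal{S}$ and $\mathrm{P} \in M$ were arbitrary, this is exactly the condition that the testing problem is trivial on $\mathcal{S}$ as defined just before the observation.

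The only real subtlety, and thus the main thing to flag, is the measure-theoretic issue of conditioning on a $V$ that has zero probability under some particular $\mathrm{P}$. The paper sidesteps this by insisting that the conditional error bound hold for \emph{every} $S \in \mathcal{S}$ rather than only $\mathrm{P}$-almost surely, so the manipulation above is legitimate in the paper's framework. Apart from this bookkeeping point, the result is an immediate consequence of the two hypotheses and requires no additional machinery.
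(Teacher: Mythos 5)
Your argument is correct and is essentially the paper's own proof: conditional control gives $\mathrm{E}_\mathrm{P}[e_\mathrm{P}(R)\mid S=V]\leq\alpha$, and since $R=S$ surely the conditional expectation collapses to the constant $e_\mathrm{P}(V)$, yielding $e_\mathrm{P}(V)\leq\alpha$ for all $V\in\mathcal{S}$ and $\mathrm{P}\in M$. Your remark about conditioning on possibly null events is a fair bookkeeping point that the paper leaves implicit, but it does not change the argument.
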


To see that this observation is true, remark that conditional control requires that $\mathrm{E}_\mathrm{P}[e_\mathrm{P}(R)\mid S] \leq \alpha$ for all $\mathrm{P} \in M$ and all $S \in \mathcal{S}$. If $R=S$ surely, the inequality reduces to $e_\mathrm{P}(S) \leq \alpha$.

It follows from Observation \ref{thm R=S} that only unconditional FCR controlling methods can be used as a means to construct FDR controlling methods; conditional FCR control has no relationship to FDR control.

\section{Discussion}

The literature on selective inference methods based on conditioning often takes the selected set of hypotheses $S$ as given, and presents the analyst's task solely as providing confidence intervals or $p$-values that are valid despite the random nature of $S$. In this paper, we see this as only the middle step of a bigger procedure, that first selects $S$ from a universe $U$, corrects for this selection, and finally uses the resulting $p$-values or confidence intervals to control an error rate of choice, resulting in a final rejected set $R$. This holistic perspective is perhaps the most important contribution of this paper. All the results in this paper are tied to this perspective.

If $S$ is simply a step in a procedure that starts with a universe $U$ and ends with a rejected set $R$, the question arises naturally what is the optimal amount of information to invest in choosing $S$. The simple answer is: none. For both primary roles of $S$, i.e.\ automatically accepting hypotheses not in $S$, and discarding all information used to select $S$, the optimal choice is to choose $S$ as large as possible.

Selection-adjusted $p$-values of confidence intervals are sometimes presented as the end result of a conditional selective inference procedure, suggesting that selection-adjustment is sufficient to address the multiplicity problem. However, the error rate (\ref{eq FCR}) thus controlled is equivalent to the per-comparison error rate (i.e\ unadjusted testing) on $S$. It does not correct for the multiplicity of $S$ itself. The larger $S$, therefore, the more and the lower the selection-adjusted $p$-values will be. From the holistic perspective, there is every incentive for the analyst to choose $S$ as large as possible, eventually reaching unadjusted testing when $S=U$. In our view, it is appropriate to present selection-adjusted $p$-values or confidence intervals without further multiple testing adjustment only if the choice of $S$ not under the control of the analyst, and only if unadjusted methods would have been appropriate if $S$ would have been non-random and given a priori.

We have given several examples of uniform improvements of conditional methods by unconditional ones, as well as general conditions under which such improvements are possible. Some of these improvements are useful and substantial; others are small or may appear artificial. We do not have a general recipe for such improvements, and we emphasize that improvements are generally not unique. In several case studies we have constructed improved procedures that were either still selective, i.e., focusing power on a small and promising set $S$ of hypotheses, or still conditional, i.e., valid conditional on the information used to find this same $S$. Invariably, we found that good selective procedures were not conditional, and good conditional procedures were not selective. Apparently, prioritizing hypotheses in $S$ and conditioning on this prioritization are conflicting goals. A multiple testing procedure that focuses its power on a promising set $S$ should exploit the information that $S$ is a promising set; a conditional procedure discards the same information by conditioning on it.

Choosing $S=U$, as we advocate, essentially means reverting to unconditional, as opposed to more stringent conditional error rates. In our view this is good enough: common unconditional error rates such as familywise error are seldom criticized for being too lenient. Some authors \citep[e.g.,][]{kuffner2018} have argued that it is better to control conditional error rates because they avoid unwarranted use of ancillary information. We find this difficult to accept as a general argument, since in most procedures $S$ is not ancillary in the usual sense, but based on a bonafide summary of the available evidence in part of the data. %From the holistic perspective that the selection of $S$ is part of the method, the unconditional error rate also has the important advantage that it does not depend on the arbitrary choice of $S$, and makes error rates comparable between methods with different $S$.

Finally we remark that allowing looks at the data prior to making inferential decisions is not exclusively the domain of conditional methods. In fact, simultaneous methods allow users to postpone some inferential decisions until after seeing all of the data \citep{goeman2011}, something conditional methods could never allow.

\appendix

\section*{Supplemental Information}

Appendices below will be Supplemental Information. Proofs will also be moved from the main text to the Supplemental information, but we left them in the main text for now for the benefit of the reviewing process.

All code to run the examples in this paper is available at \url{https://aldosolari.github.io/selectingconditioning/}.

\section{Fifth example: a uniform improvement related to Proposition \ref{thm extra sup}} \label{sec extra sup}

We give an example of a uniform improvement of a procedure that relates to Proposition \ref{thm extra sup}.

Consider a simple hypothesis testing problem with two null hypotheses about the same parameter $\mu$. Suppose that for some fixed $\delta \geq 0$ are interested in the following two hypotheses:
\[
H_1: \mu \geq -\delta; \qquad H_2: \mu \leq \delta.
\]

Suppose $X_1$ and $X_2$ are independent $\mathcal{N}(\mu, 1)$. In the spirit of data splitting, we will use $X_1$ to decide which of the two hypotheses we will test using $X_2$. Consider the following conditional procedure. Let $S= \{1\}$ if $X_1 < 0$ and $S = \{2\}$ if $X_1 \geq 0$. This seems sensible, since we expect $X_2$ to have the same sign as $X_1$ with high probability if at least one of the null hypotheses is false. Therefore, $S$ pre-selects the null hypothesis we are most likely to reject. Next we choose how to test the hypothesis in $S$. If $S = \{1\}$, we reject $H_1$ if $X_2 < -\delta -z_\alpha$, where $z_\alpha$ is chosen such that $\Phi(-z_\alpha) = \alpha$, and $\Phi$ is the standard normal distribution function; If $S = \{2\}$, we reject $H_2$ when $X_2 > \delta + z_\alpha$. It is easy to check that conditional on $S$, the probability of falsely rejecting the hypothesis in $S$ is bounded by $\alpha$, and that this probability is exactly $\alpha$ in the situation that $\mu=\delta$ if $S=\{2\}$ and if $\mu=-\delta$ if $S=\{1\}$. As a conditional selective procedure, this procedure can not be uniformly improved. Since we have $|S|=1$, as in Section \ref{sec winner}, the procedure controls FWER as well as all less stringent error rates.

We can, however, improve the procedure uniformly as an unconditional procedure. The condition of Proposition \ref{thm extra sup} is fulfilled, since for this procedure the `worst case' $\mathrm{P}$, i.e.\ the distribution for which the $\alpha$-level of the test is exhausted, depends on $S$.
Let us aim to retain FWER control, and write down the closed testing procedure that is implied by the procedure we have just constructed. Write $H_{12} = H_1 \cap H_2: -\delta \leq \mu \leq \delta$. This closed testing procedure rejects $H_{12}$ if $s(X_1)X_2 > \delta+z_\alpha$, where $s(X_1)$ is the sign of $X_1$, taken as 1 if $X_1=0$; it rejects $H_1$ if $-X_2 > \delta+z_\alpha$ and $X_1 < 0$, and $H_2$ if $X_2 > \delta+z_\alpha$ and $X_1 \geq 0$. We can check that this procedure rejects $H_1$ or $H_2$ exactly when the conditional procedure does.

Next, we check whether this procedure exhausts its $\alpha$-level. The probability of rejecting $H_{12}$ is
\begin{eqnarray*}
\mathrm{P} (s(X_1)X_2 > \delta+z_\alpha) &=& \mathrm{P}_\mu(X_2 > \delta + z_\alpha)\mathrm{P}_\mu(X_1\geq 0) + \mathrm{P}_\mu(X_2 < -\delta - z_\alpha)\mathrm{P}_\mu(X_1<0) \\
&=& \Phi(\mu - \delta - z_\alpha)\Phi(\mu) + \Phi(-\mu - \delta - z_\alpha)\Phi(-\mu). 
\end{eqnarray*}
Within $-\delta \leq \mu \leq \delta$, this is maximized when $\mu = \delta$ or $\mu = -\delta$, when we have a rejection probability of 
\[
\mathrm{P} (s(X_1)X_2 > \delta+z_\alpha) = \alpha \Phi(\delta) + \Phi(-2\delta - z_\alpha) (1-\Phi(\delta)).
\]
This probability is equal to $\alpha$ only if $\delta = 0$, and strictly smaller than $\alpha$ otherwise. Reasoning similarly, we can calculate the probability of rejecting $H_1$ or $H_2$ as bounded by a smaller factor $\alpha \Phi(\delta)$. Since none of the local tests exhaust the $\alpha$-level, we can uniformly improve the closed testing procedure by performing all tests at an increased nominal level $\alpha'$ such that
\[
\alpha' \Phi(\delta) + \Phi(-2\delta - z_{\alpha'}) (1-\Phi(\delta)) = \alpha,
\]
instead of $\alpha$. Although the difference between $\alpha'$ and $\alpha$ vanishes as $\delta \to 0$ or $\delta \to \infty$, it can be substantial in between. For example, with $\alpha = 0.05$ and $\delta =0.5$, we find $\alpha' = 0.0654$. We can increase the $\alpha$-level of the local tests of $H_1$ and $H_2$ even further to $\alpha/\Phi(\delta)$, but doing so would not improve the closed testing procedure as a whole.

\begin{comment}
fun = function(alpha)
  alpha* pnorm(delta) + pnorm(-2*delta+qnorm(alpha)) * pnorm(-delta) - 0.05
uniroot(fun, c(0, .1))
\end{comment}

\section{Supplementary material to the Fourth Example} \label{sec_extra_fourthexample}

\subsection{Mathematical details of \cite{liu2018}}

Let $\eta_i = X( X^\mathsf{T} X)^{-1}e_i$, where $e_i$ is a vector with all components equal to 0, except the $i$th, which is 1, and write $y$ as the sum of two orthogonal vectors $u_i + v_i$ with $u_i=P_{\eta_i}y$, $v_i=(I_n - P_{\eta_i})y$ and $P_{\eta_i} = \eta_i \eta_i^\mathsf{T} /\| \eta_i \|^2$. 

Proposition 1 of \cite{liu2018} shows that the distribution of $\hat{\beta}_i = \eta_i^\mathsf{T} y$ conditional to $i \in S$ and $v_i$ is the $N(\beta_i, \sigma^2 \|\eta_i\|^2)$ truncated to $(-\infty, a_i] \cup [b_i,\infty)$, where 
$$a_i = \|\eta_i\|^2 (X_i^\mathsf{T} r_i - \lambda), \quad b_i = \|\eta_i\|^2 (X_i^\mathsf{T} r_i + \lambda)$$ with $r_i = X_{-i} \tilde{\beta}_{-i} - v_i$, $X_i$ is the $i$th column of $X$, $X_{-i}$ is the submatrix of $X$ after removing the $i$th column, and $\tilde{\beta}_{-i}$ is the lasso fit of $v_i$ to $X_{-i}$ with penalty $\lambda$.
Likewise, the distribution of $\hat{\beta}_i = \eta_i^\mathsf{T} y$ conditional to $i \notin S$ and $v_i$ is the $N(\beta_i, \sigma^2 \|\eta_i\|^2)$ truncated to $(a_i,b_i)$.

\subsection{Numerical example}

In the following simulation, it is demonstrated numerically that there is a lack of FCR control at the $\alpha$ level and a lack of simultaneous control at the confidence level of $1-\alpha/|S|$.
We have considered a setting that is similar to the one used in Appendix B of \cite{liu2018}, with $n=m=2$, $X_1 = (\sqrt{(1+\rho)/2},-\sqrt{(1-\rho)/2})^\mathsf{T}$ and $X_2 = (\sqrt{(1+\rho)/2},\sqrt{(1-\rho)/2})^\mathsf{T}$. Each column of $X$ has unit norm and $X_1^\mathsf{T}X_2 = \rho$. We set $\rho=0.95$ and we simulated $10^5$ realizations of $Y\sim N(\mu, \sigma^2 I)$ by choosing $\sigma^2=1$ and $\beta = (5,5)^\mathsf{T}$, which gives $\mu = X\beta = (10,0)^\mathsf{T}$. The penalty parameter for the lasso was set to $\lambda = 0.2$ and the confidence level $(1-\alpha)$ of \citeauthor{liu2018}'s  confidence intervals was set to $90\%$. 

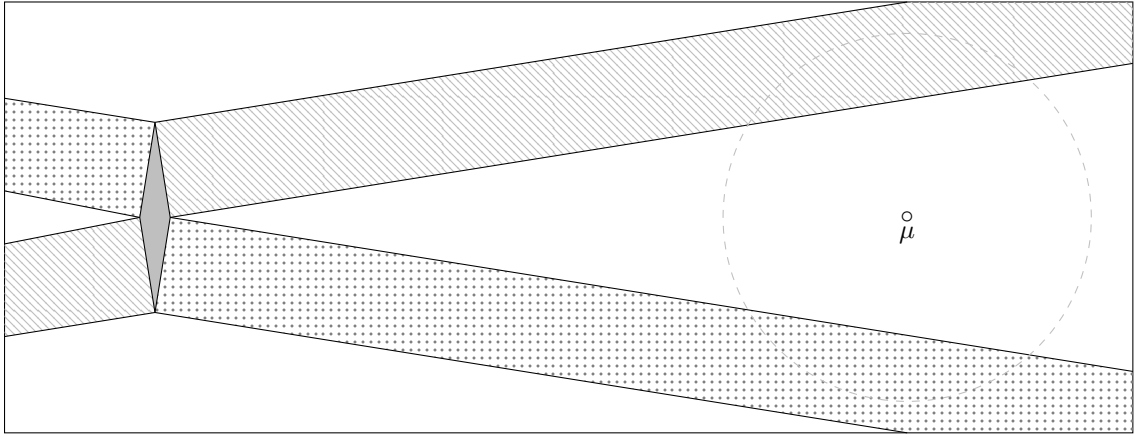
\begin{figure}[!ht]
\centering
\usetikzlibrary{patterns}
\begin{tikzpicture}[scale=1]
\draw (-2,- 2.866193) rectangle (13, 2.866193);

\begin{scope}
      \fill[lightgray] (0, 1.264911) -- ( 0.2025479, 0) -- (0,-1.264911) -- (-0.2025479, 0);
      \fill[pattern=north west lines, pattern color=lightgray] (0, 1.264911) -- (10,  2.866193) -- (13,2.866193) -- (13,2.049232) -- (0.2025479,  0);
      \fill[pattern=north west lines, pattern color=lightgray] (0, -1.264911) -- (-2,  -1.585167) -- (-2,-0.3526899) -- (-0.2025479,  0);
      \fill[pattern=dots, pattern color=gray] (0, 1.264911) -- (-2,  1.585167) -- (-2,0.3526899) -- (-0.2025479,  0);
      \fill[pattern=dots, pattern color=gray] (0, -1.264911) -- (10,  -2.866193) -- (13,-2.866193) -- (13,-2.049232) -- (0.2025479,  0);
  \end{scope}

\draw (0, 1.264911) -- (-0.2025479, 0);
\draw (0, 1.264911) -- ( 0.2025479, 0);
\draw (0,-1.264911) -- (-0.2025479, 0);
\draw (0,-1.264911) -- ( 0.2025479, 0);

\draw (0,  1.264911) -- (10,  2.866193);
\draw (0,  1.264911) -- (-2,  1.585167);
\draw (0, -1.264911) -- (10, -2.866193);
\draw (0, -1.264911) -- (-2, -1.585167);

\draw ( 0.2025479,  0) -- (13,  2.049232);
\draw ( 0.2025479,  0) -- (13, -2.049232);
\draw (-0.2025479,  0) -- (-2,  0.3526899);
\draw (-0.2025479,  0) -- (-2, -0.3526899);

\draw (10,0) node {$\circ$};
\draw (10,0) node[below] {$\mu$};
\draw[lightgray, dashed] (10,0) circle (2.447747);
\end{tikzpicture}
\caption{Plot of lasso selection regions in the $Y$ space: grey, dotted, diagonal lines and white regions indicate $S=\emptyset$, $S=\{1\}$, $S=\{2\}$ and $S=\{1,2\}$, respectively. $Y\sim N(\mu, I_2)$ and the dashed circle corresponds to the 
95\% quantile.}
\label{fig_tikz_example4}
\end{figure}

Figure \ref{fig_tikz_example4} shows the lasso selection regions in $Y$ space, e.g. the parallelogram region corresponds to $X_i^\mathsf{T}Y = \pm \lambda$ for $i=1,2$. The simulation setting was chosen to ensure that the probability of selecting no feature is almost zero, i.e., $\mathrm{P}(S=\emptyset) \approx 0$, and that the probability of selecting the first feature is equal to that of selecting the second feature, i.e., $\mathrm{P}(S=\{1\}) = \mathrm{P}(S=\{2\})$. Additionally, the conditional coverage is the same for both confidence intervals, i.e., $\mathrm{P}(\mathrm{P} \in C_{1\mid 1\in S}\mid S = \{1\}) = \mathrm{P}(\mathrm{P} \in C_{2\mid 2\in S}\mid S = \{2\})$ and $\mathrm{P}(\mathrm{P} \in C_{1\mid 1\in S}\mid S = \{1,2\}) = \mathrm{P}(\mathrm{P} \in C_{2\mid 2\in S}\mid S = \{1,2\})$. Table \ref{tab:MC_simulation} reports the estimated probabilities of selection by the lasso algorithm and the estimated conditional coverage of \citeauthor{liu2018}'s  $90\%$ confidence intervals based on a Monte Carlo simulation with $10^6$ repetitions.

\begin{table}[]
    \centering
    \begin{tabular}{|l|l|}
        \hline
    Probability of selection & Conditional coverage\\
    \hline
      $\mathrm{P}(S = \{i\})  = 0.0609065$  & $\mathrm{P}(\mathrm{P} \in C_{i\mid i\in S}\mid S = \{i\}) = 0.3878018 $  \\
      $\mathrm{P}(S = \{1,2\}) = 0.878187 $ & $\mathrm{P}(\mathrm{P} \in C_{i\mid i\in S}\mid S = \{1,2\}) = 0.935532$ \\ 
      \hline
    \end{tabular}
    \caption{The estimated probabilities of selection by the lasso algorithm and the estimated conditional coverage of \citeauthor{liu2018}'s  $90\%$ confidence intervals based on a Monte Carlo simulation with $10^6$ repetitions. The upper bound for Monte Carlo error is $\sqrt{0.25/10^{6}} = 0.0005$. }
    \label{tab:MC_simulation}
\end{table}

According to the table, when one feature is selected, the confidence intervals have substantially less coverage than the desired level of $90\%$. However, when both features are selected, the confidence intervals have a slightly higher coverage than the desired level. As expected, 
\begin{eqnarray*}
\mathrm{P}(\mathrm{P} \in C_{i \mid i \in S}\mid i \in S) &=& \mathrm{P}(\mathrm{P} \in C_{i \mid i \in S}\mid S = \{i\} \cup S=\{1,2\})\\
&=&\frac{\mathrm{P}(\mathrm{P} \in C_{i\mid i\in S}\mid S = \{i\}) \mathrm{P}(S = \{i\}) +  \mathrm{P}(\mathrm{P} \in C_{i\mid i\in S}\mid S = \{1,2\}) \mathrm{P}(S = \{1,2\})}{\mathrm{P}(S = \{i\}) + \mathrm{P}(S = \{1,2\})}\\
&=& 0.9
\end{eqnarray*}
for $i=1,2$, i.e. condition (\ref{eq liu}) holds.
However, the selective conditional confidence intervals of \cite{liu2018} do not control the unconditional FCR:
\begin{eqnarray*}
\mathrm{E}_P \bigg[ \frac{|\{i \in S\colon \mathrm{P} \notin C_{i\mid i \in S}\}|} {|S| \vee 1} \bigg] &=& 2\mathrm{P}(\mathrm{P} \notin C_{i\mid i\in S}\mid S = \{i\})\mathrm{P}(S = \{i\}) + \mathrm{P}(\mathrm{P} \notin C_{i\mid i\in S}\mid S = \{1,2\})\mathrm{P}(S = \{1,2\})  \\
&=& 0.1312 > 0.1
\end{eqnarray*}
Furthermore, adjusting the confidence level to $1-\alpha/|S|$ does not guarantee unconditional simultaneous control:
\begin{eqnarray*}
\mathrm{P}\Big(\mathrm{P} \in \bigcap_{i\in S} C_{i\mid  i \in S} \Big) &=& 2\mathrm{P}(\mathrm{P} \in C_{i\mid i\in S}\mid S = \{i\})\mathrm{P}(S = \{i\}) \\
&& +\ \mathrm{P}(\mathrm{P} \in \tilde{C}_{1\mid 1\in S} \cap \tilde{C}_{2\mid 2\in S} \mid S = \{1,2\})\mathrm{P}(S = \{1,2\})
%\\ &=& 
=0.8777 < 0.9
\end{eqnarray*}
where $\tilde{C}_{i\mid i\in S}$ is the $i$th interval at the adjusted confidence level $(1-\alpha/2) = 95\%$ and  $\mathrm{P}(\mathrm{P} \in \tilde{C}_{1\mid 1\in S} \cap \tilde{C}_{2\mid 2\in S} \mid S = \{1,2\})=0.9456836$ is the estimated simultaneous conditional coverage when two features are selected.

\subsection{Prostate data set}

\cite{stamey1989prostate} was interested in the relation between prostate specific antigen (PSA) and several clinical measures, including log cancer volume (\texttt{lcavol}), log prostate weight (\texttt{lweight}), age, log of benign
prostatic hyperplasia amount (\texttt{lbph}), seminal vesicle invasion (\texttt{svi}), log of capsular penetration (\texttt{lcp}), the
Gleason score (\texttt{gleason}), and percent of Gleason scores 4 or 5 (\texttt{pgg45}). 
The dataset consisted of information collected from $n=97$ men who were preparing to undergo a radial prostatectomy.

The estimate from the regression model with all variables was utilized as the true value of $\sigma^2$. 
For this dataset, Liu et al. (2018) used $\lambda=0.0324$ (chosen by 10-fold cross-validation), which resulted in the selection of 7 variables. The following table compares the unconditional $p$-values for the hypotheses $\beta_i=0$ with the selective conditional $p$-values (the row corresponding to $\lambda=0.0324$) for the selected features (in black) and the non-selected features (in gray).

\begin{table}[ht]
\centering
\begin{tabular}{r|rrrrrrrr}
  \hline
$\lambda$ & lcavol & lweight & age & lbph & svi & lcp & gleason & pgg45 \\ 
  \hline
0.0324 & 0.0000 & 0.0036 & 0.0915 & 0.1779 & 0.0031 & 0.4666 & \textcolor{gray}{0.0073} & 0.7419 \\ 
0 & 0.0000 & 0.0020 & 0.0552 & 0.0949 & 0.0016 & 0.2380 & 0.7513 & 0.3072 \\ 
   \hline
\end{tabular}
\end{table}

In Figure \ref{fig pvalues}, the selective conditional  $p$-values are shown as a function of $\lambda$. It is worth noting that the $p$-values for the selected features begin with unconditional values and tend to either increase or increase and then decrease as they approach the critical threshold for selection. In contrast, the $p$-values for non-selected features start at 0 and eventually converge to the unconditional $p$-values.  

\begin{figure}[!ht]
    \centering
    \includegraphics[width=0.8\textwidth]{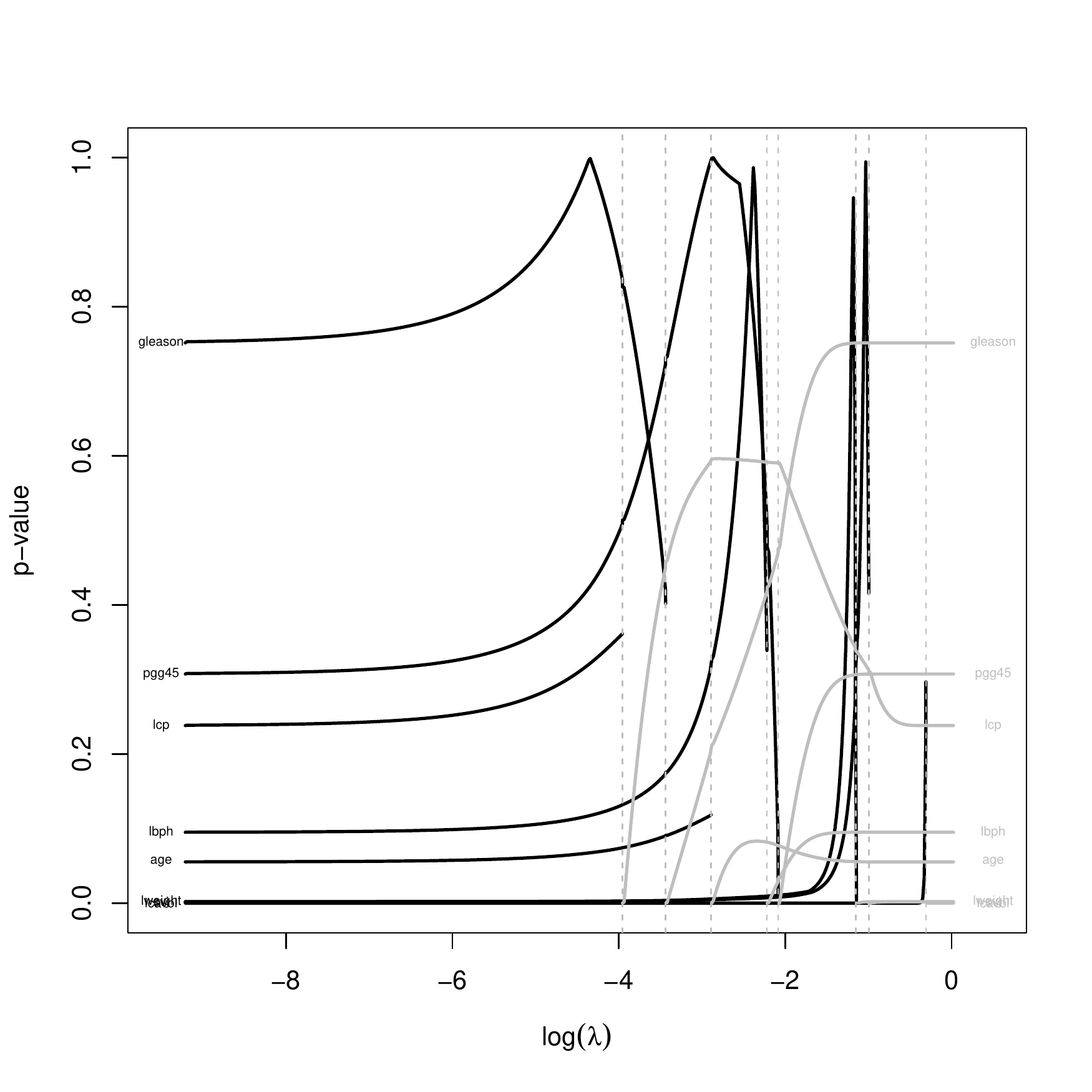}
    \caption{Selective conditional $p$-values as a function of $\lambda$. Black $p$-value curves are conditional on selection by the lasso, grey ones are
conditional on non-selection. Vertical dashed lines indicate the values of $\lambda$ at which the active set change.} 
    \label{fig pvalues}
\end{figure}

\bibliographystyle{chicago}

\end{document}